\documentclass[journal,twocolumn]{IEEEtran}
\usepackage{ifpdf}
\usepackage{cite}
\usepackage{graphicx} 
\usepackage{mathtools}
\usepackage{amsmath,amsfonts,bm,amssymb}
\usepackage{mathalfa}
\usepackage[thinc]{esdiff}
\usepackage{balance}
\usepackage{tikz}
\usepackage{array}
\usepackage{xcolor}
\usepackage[caption=false,font=footnotesize]{subfig}

\newcommand{\lr}[1]{\langle #1 \rangle}
\newcommand{\blr}[1]{\big\langle #1 \big\rangle}
\newcommand{\bblr}[1]{\bigg\langle #1 \bigg\rangle}

\newcommand{\ji}{{\sf j}}

\newtheorem{theorem}{Theorem}

\newtheorem{corollary}{Corollary}
\newtheorem{remark}{Remark}

\newcommand{\bs}[1]{\boldsymbol{#1}}
\newcommand{\ms}[1]{\mathsf{#1}}
\newcommand{\mc}[1]{\mathcal{#1}}

\newcommand{\mb}[1]{\mathbf{#1}}
\newcommand{\mr}[1]{\mathrm{#1}}
\newcommand{\tr}{\mathrm{Tr}}
\newcommand{\var}{\mathrm{Var}}

\newcommand{\md}{\mathrm{d}}
\newcommand{\e}{\mathrm{e}}

\usepackage{algorithm, algorithmicx, algpseudocode}

\title{Variational Bayesian Data Detection for Multiuser MIMO Systems Corrupted by Phase Noises}
\author{Toan-Van Nguyen and Duy H. N. Nguyen
\thanks{The authors are with the Department of Electrical and Computer Engineering, San Diego State University, San Diego, CA 92182 USA (emails:\{tnguyen58, duy.nguyen\}@sdsu.edu).}}

\begin{document}

	\maketitle
	
\begin{abstract}
Phase noise (PN), arising from imperfect local oscillators, introduces multiplicative distortions that degrade the performance of communication systems. In uplink multiuser multiple-input multiple-output (MIMO) systems, this impairment is further compounded by the presence of independent oscillators at each transmit and receive antenna, each contributing an uncorrelated noise component. Existing PN compensation algorithms at the receiver either rely on linearization approximations that lose accuracy under severe PN conditions, or incur computational complexity that scales prohibitively with the number of antennas. To address these limitations, we propose a variational Bayes (VB) framework for joint PN estimation and data detection in uplink MIMO systems. We develop VB-based detectors that treat noise statistics as latent variables, and reformulate the inference problem by absorbing the transmitter PN into the transmitted signal, treating the resulting composite variable as the inference target. Under von Mises priors, this reformulation yields exact closed-form conjugate posterior updates, from which we derive an improved detector achieving superior performance at low complexity. Simulation results demonstrate that the proposed VB algorithm achieves lower symbol error rates than the Self-Interference Whitening (SIW) algorithm and conventional phase-noise-unaware detectors across a wide range of channel conditions, modulation orders, and PN severities, while remaining computationally scalable to large MIMO deployments.
\end{abstract}
\begin{IEEEkeywords}
Data detection, MIMO, phase noise, variational Bayes.
\end{IEEEkeywords}
\section{Introduction}
The imperfect local oscillators at the transmitter and receiver introduce phase noise (PN), a major hardware impairment in multiuser multiple-input multiple-output (MIMO) systems \cite{chang2024modeling}. The demand for higher spectral efficiency drives communication systems toward high-order constellations. However, even moderate PN can induce severe rotational distortion in high-order received signal constellations, resulting in pronounced error floors at high signal-to-noise ratio (SNR) \cite{pitarokoilis2014uplink}. This problem becomes severe at millimeter-wave (mmWave) and terahertz frequencies, where oscillator instability is higher, as well as in geostationary satellite links, where large phase drifts caused by the Doppler effect exacerbate the distortion \cite{marrero2023accurate}. In massive MIMO uplink systems, the challenge is further compounded because each user and each receive antenna may be driven by an independent oscillator, resulting in a multiplicative, high-dimensional PN process that is tightly coupled with inter-user interference and the non-Gaussian discrete symbol prior, rendering exact maximum-likelihood (ML) detection computationally intractable \cite{rasekh2021phase}. Addressing this challenge requires advanced signal processing methods that can achieve accurate, scalable, and amenable to closed-form computation.

Early PN mitigation methods relied on linearization of the exponential PN term via first-order Taylor expansion. The Extended Kalman Filter-Smoother (EKFS) \cite{nasir2013phase,reggiani2018extended,shehata2009joint} and the Expectation Maximization (EM) algorithm \cite{isikman2014joint,tarable2013based} were proposed, where EKFS applied Kalman filtering on the resulting linear-Gaussian model, while EM alternated between PN estimation and symbol likelihood maximization, both achieving accuracy under mild PN and coded MIMO systems.
A variational Bayesian (VB) framework in \cite{nissila2009adaptive} generalized EM by maintaining a full phase posterior rather than a point estimate, and handled dynamic PN via an Extended Kalman Smoother. However, all linearization-based methods share a common limitation: the Taylor approximation becomes increasingly inaccurate when the PN variance increases, leading to performance degradation in the high-PN regimes encountered at high carrier frequencies.

To handle strong PN without linearization, factor-graph methods operating on the full probabilistic model have been proposed. Tikhonov (also known as von Mises) parametrization within the sum-product algorithm (SPA) was proposed to enable accurate PN tracking for additive white Gaussian noise (AWGN) and optical channels \cite{colavolpe2005algorithms,alfredsson2019iterative}, while expectation propagation (EP)-based variants reduced pilot overhead on intersymbol interference (ISI) channels \cite{conti2025application}.
Hybrid receivers combining belief propagation (BP), mean-field, and EP have extended this to coded ISI and multiuser MIMO settings \cite{wang2016bp,he2022auxiliary}, though their complexity scales cubically with the channel memory length.
Variational message passing \cite{wang2016low,badiu2013message} and SPA-based factor-graph detectors \cite{krishnan2015algorithms} offered complementary approaches by restricting all message updates to closed-form exponential family distributions, thereby handling the nonlinear PN observation model without linearization.

Approximate ML and Bayesian methods offer more scalable alternatives. In \cite{combes2017approximate}, the self-interference whitening (SIW) approach approximated the ML likelihood by pre-whitening the PN-induced self-interference, which reduced detection to a minimum Euclidean distance problem solvable by sphere decoding. The SIW avoids the PN tracking process, but its first-order PN approximation becomes inaccurate at large PN variance. The sphere-decoding in SIW also incurs complexity that grows cubically with the number of antennas \cite{vikalo2005sphere}, making the SIW impractical for large MIMO systems. The generalized expectation-consistent signal recovery (GEC-SR) framework in \cite{he2017generalized} replaced sphere decoding with an iterative message-passing procedure whose per-iteration cost was dominated by matrix-vector products, significantly reducing complexity \cite{yang2019symbol}. However, GEC-SR relies on approximating all involved distributions as Gaussian, which might lose validity at high PN variance and cause performance degradation when the antenna array increases. For PN-free MIMO systems, VB inference has been shown to offer an attractive balance between detection accuracy and computational efficiency. Particularly, the VB inference was developed for data detection in \cite{nguyen2025mimo,nguyen2024optimal,do2025variational} and joint data detection and channel estimation in \cite{nassirpour2025variational}. The goal of VB is to find an approximation for the true posterior distribution of latent variables, for example, data symbols, PNs, and noise variance in this work, given the observed data \cite{nguyen2022variational}. VB has also been used in the context of joint channel estimation and data detection in massive MIMO \cite{nassirpour2025variational} and DoA estimation \cite{nguyen2025gridless}.

\subsection{Motivations and Contributions}
These aforementioned works reveal a trade-off between accuracy and tractability. Linearization-based methods are efficient but fail under strong PN \cite{nasir2013phase,reggiani2018extended,shehata2009joint}. Factor-graph methods are more accurate but impose high complexity, especially in MIMO with several independent oscillators \cite{colavolpe2005algorithms,alfredsson2019iterative,conti2025application}. Approximate ML and Bayesian approaches reduce complexity but introduce posterior mismatches that widen the performance gap at high SNR \cite{combes2017approximate}. In this paper, we develop VB algorithms for joint PN estimation and data detection in multiuser MIMO systems. Different from \cite{nissila2009adaptive,conti2025application,wang2016bp}, which mainly focus on AWGN channels and incur high computational complexity, our VB framework yields closed-form expressions for the posterior distributions of the PNs and data symbols, with per-iteration cost that scales linearly in the number of users, receive antennas, and constellation size. 
Conventional separation-based approaches treat the transmitter PN as an independent latent variable estimated alongside the data symbols. However, the data symbol and the transmitter PN are coupled through the composite transmit signal and cannot be accurately approximated as independent, so this separation introduces a posterior mismatch that degrades detection performance. In this paper, we resolve this by absorbing the transmitter PN directly into the transmitted signal and treating the resulting composite variable as the inference target. 

The main contributions of this paper are as follows:
\begin{itemize}
  \item We develop a matched-filter variational Bayes (MF-VB) algorithm for joint PN estimation and data detection in uplink multiuser MIMO systems, adopting von Mises priors for transmitter and receiver PNs and treating noise precision as a latent variable. Closed-form posterior updates are derived via coordinate-ascent variational inference (CAVI).

\item We further develop an LMMSE-VB detector that generalizes MF-VB by modeling the full noise covariance matrix as the inference target, capturing residual inter-user interference structure and improving robustness under correlated fading and mmWave channels.

\item We resolve a posterior mismatch in separation-based approaches by absorbing the transmitter PN into the transmitted signal, treating the composite variable as the inference target. Under a von Mises prior, this yields an exact closed-form conjugate posterior as a finite mixture of von Mises distributions. The resulting improved MF-VB detector eliminates separate PN tracking variables, achieving superior detection and PN estimation performance at the same complexity as MF-VB.

\item We provide a complexity analysis showing that MF-VB and improved MF-VB scale linearly in the number of users, receive antennas, constellation size, and iterations, contrasting with the exponential scaling of ML and SIW detectors. Simulations confirm that the improved MF-VB achieves superior detection performance, particularly in the high-SNR regime.
	\end{itemize}
The rest of the paper is organized as follows. Section~\ref{sec:Sys} introduces the signal model for the PN estimation problem and preliminary results. Section~\ref{sec:VBforMIMO} analyzes the VB framework for data detection with PNs and the scalar and full noise covariance models, leading to the MF-VB and LMMSE-VB algorithms, respectively. Section~\ref{sec:ImprovedMF} proposes the reformulated signal model that absorbs the transmitter PN into the composite transmit variable, and derives the improved MF-VB algorithm along with a computational complexity analysis of all proposed detectors. Simulation results are provided in Section~\ref{sec:Sim}, where the proposed algorithms are evaluated under i.i.d. and correlated Rayleigh fading channels, mmWave propagation environments. Section~\ref{sec:Conclusions} concludes the paper.

\underline{\textit {Notation}}: Boldface lowercase and boldface uppercase variables denote vectors and matrices, respectively. The $L_2$-norm and the absolute value are indicated by $\|\cdot\|$ and $|\cdot|$, respectively. Real and imaginary parts are denoted by $\Re\{\cdot\}$ and $\Im\{\cdot\}$, respectively, with $\ji =\sqrt{-1}$. The distribution of a complex multivariate Gaussian $\mb{x}$ with mean $\bs{\mu}$ and covariance matrix $\bs{\Sigma}$ is denoted by $\mc{CN}(\mb{x};\bs{\mu},\bs{\Sigma})$. 
and is also written as $\mb{x} \sim \mathcal{CN}(\boldsymbol{\mu}, \bs{\Sigma})$. Similarly, the distribution of a von Mises random variable $x$ with natural parameter $\eta$ is denoted by $\mc{VM}(x;\eta)$, and is also written as $x\sim \mc{VM}(\eta)$. The identity matrix is denoted by $\mb{I}$, the trace operator by $\tr$($\cdot$), and the expectation operator by $\mathbb{E}\{\cdot\}$. The transpose, complex conjugate, and complex conjugate transpose operators are denoted by $(\cdot)^T$, $(\cdot)^*$, and $(\cdot)^H$, respectively. The symbols $\propto$ and $\sim$ stand for ``is proportional to'' and ``distributed according to'', respectively. The bracket $\lr{\cdot}$ represents expectation with respect to all latent variables except the one under consideration. 

\section{System Model and Problem Formulation} \label{sec:Sys}
	
	\begin{figure}
		\centering
		\begin{tikzpicture}
			\node at (-0.55,4) {$s_{1}$};
			\draw [semithick] (-0.3,4) to (0.4,4);
			\draw [semithick] (0.55,4) circle [radius=0.15] node {$\times$};
			\draw [-latex] (0.55,4.4) to (0.55,4.15);
			\node at (0.555,4.55) {$\e^{\ji\theta_1}$};
			\draw [semithick] (0.7,4) to (1.5,4) to (1.5,4.3) to (1.3,4.5) to (1.7,4.5) to (1.5,4.3);
			
			\node at (-0.55,3) {$s_{2}$};
			\draw [semithick] (-0.3,3) to (0.4,3);
			\draw [semithick] (0.55,3) circle [radius=0.15] node {$\times$};
			\draw [-latex] (0.55,3.4) to (0.55,3.15);
			\node at (0.55,3.55) {$\e^{\ji\theta_2}$};
			\draw [semithick] (0.7,3) to (1.5,3) to (1.5,3.3) to (1.3,3.5) to (1.7,3.5) to (1.5,3.3);
			
			\node at (-0.55,1) {$s_{K}$};
			\draw [semithick] (-0.3,1) to (0.4,1);
			\draw [semithick] (0.55,1) circle [radius=0.15] node {$\times$};
			\draw [-latex] (0.55,1.4) to (0.55,1.15);
			\node at (0.55,1.55) {$\e^{\ji\theta_K}$};
			\draw [semithick] (0.7,1) to (1.5,1) to (1.5,1.3) to (1.3,1.5) to (1.7,1.5) to (1.5,1.3);
			
			\node at (0.55,2.4) {$\vdots$};
			
			\node at (3,2.8) {$\mb{H}$};
			
			\draw [semithick] (5.3,4) to (4.5,4) to (4.5,4.3) to (4.3,4.5) to (4.7,4.5) to (4.5,4.3);
			\draw [semithick] (5.45,4) circle [radius=0.15] node {$\times$};
			\draw [-latex] (5.45,4.4) to (5.45,4.15);
			\node at (5.45,4.55) {$\e^{\ji\phi_1}$};
			\draw [semithick] (5.6,4) to (6.2,4);
			\draw [semithick] (6.35,4) circle [radius=0.15] node {$+$};
			\draw [-latex] (6.35,4.4) to (6.35,4.15);
			\node at (6.35,4.55) {$n_1$};
			\draw [semithick] (6.5,4) to (7.2,4);
			\node at (7.45,4) {$y_{1}$};
			
			\draw [semithick] (5.3,3) to (4.5,3) to (4.5,3.3) to (4.3,3.5) to (4.7,3.5) to (4.5,3.3);
			\draw [semithick] (5.45,3) circle [radius=0.15] node {$\times$};
			\draw [-latex] (5.45,3.4) to (5.45,3.15);
			\node at (5.45,3.55) {$\e^{\ji\phi_2}$};
			\draw [semithick] (5.6,3) to (6.2,3);
			\draw [semithick] (6.35,3) circle [radius=0.15] node {$+$};
			\draw [-latex] (6.35,3.4) to (6.35,3.15);
			\node at (6.35,3.55) {$n_2$};
			\draw [semithick] (6.5,3) to (7.2,3);
			\node at (7.45,3) {$y_{2}$};
			
			\draw [semithick] (5.3,1) to (4.5,1) to (4.5,1.3) to (4.3,1.5) to (4.7,1.5) to (4.5,1.3);
			\draw [semithick] (5.45,1) circle [radius=0.15] node {$\times$};
			\draw [-latex] (5.45,1.4) to (5.45,1.15);
			\node at (5.45,1.55) {$\e^{\ji\phi_M}$};
			\draw [semithick] (5.6,1) to (6.2,1);
			\draw [semithick] (6.35,1) circle [radius=0.15] node {$+$};
			\draw [-latex] (6.35,1.4) to (6.35,1.15);
			\node at (6.35,1.55) {$n_M$};
			\draw [semithick] (6.5,1) to (7.2,1);
			\node at (7.45,1) {$y_{M}$};	
			
			\node at (5.45,2.4) {$\vdots$};
			
			\draw [-latex,dashed] (1.8,4.4) to (4.2,4.4);
			\draw [-latex,dashed] (1.8,4.4) to (4.2,3.4);
			\draw [dashed] (1.8,4.4) to (2.88,3.05);
			\draw [-latex,dashed] (3.24,2.6) to (4.2,1.4);
			\draw [-latex,dashed] (1.8,3.4) to (4.2,4.4);
			\draw [-latex,dashed] (1.8,3.4) to (4.2,3.4);
			\draw [-latex,dashed] (1.8,3.4) to (4.2,1.4);
			\draw [-latex,dashed] (1.8,1.4) to (4.2,1.4);
			\draw [-latex,dashed] (1.8,1.4) to (4.2,3.4);	
			\draw [dashed] (1.8,1.4) to (2.76,2.6);
			\draw [-latex,dashed] (3.12,3.05) to (4.2,4.4);
		\end{tikzpicture}
		\caption{Diagram of a MIMO system with $K$ transmit antennas and $M$ receive antennas. The transmitted signal at user-$i$ is corrupted by the phase noise factor $\e^{\ji\theta_i}$, whereas the received signal at antenna-$m$ is corrupted by the phase noise factor $\e^{\ji\theta_m}$ and the background noise $n_m$.}
		\label{figure-model}
	\end{figure}
	
\subsection{System Description}
We consider an uplink MIMO system consisting of $K$ single-antenna users and a BS equipped with $M$ antennas, as shown in Fig. \ref{figure-model}. We denote $s_i$ as the transmitted symbol, drawn from a discrete constellation $\mc{A}$, \emph{e.g.}, quadrature amplitude modulation (QAM) or phase-shift keying (PSK). Each symbol is zero-mean with normalized power, i.e., $\mathbb{E}[s_i] = 0$ and $\mathbb{E}[|s_i|^2] = 1$. The prior distribution of $s_i$ is given by
	\begin{align}
		p(s_i) = \sum_{a \in \mc{S}}p_a\delta(s_i - a),
	\end{align}
where $p_a$ is the known prior probability of the constellation point $a$ and $\delta(s_i-a)$ indicates the point mass function at $a$. Each transmitted symbol $s_i$ is perturbed by a phase noise component $\theta_i$, such that the \textit{phase-noise-corrupted transmit signal} from user-$i$ is given by
\begin{align} \label{eq:x_i}
		x_i = s_i\, \e^{\ji\theta_i}.
\end{align}
Define the transmit PN matrix as $\bs{\Theta} = \mr{diag}\big(\e^{\ji \theta_1},\ldots,\e^{\ji \theta_K}\big)$ and let $\mb{x} = \bs{\Theta}\mb{s}$ be the PN-corrupted transmit vector, the received signal vector at the BS can be expressed as
	\begin{align}\label{eq:system_model}
		\mb{y} = \bs{\Phi}\mb{H}\bs{\Theta}\mb{s} + \mb{n} = \bs{\Phi}\mb{H}\mb{x} + \mb{n},
	\end{align}
where $\bs{\Phi} = \mr{diag}\big(\e^{\ji \phi_1},\ldots,\e^{\ji \phi_M}\big)$ captures the phase noises at the BS, the uplink channel matrix $\mb{H} = [\mb{h}_1,\ldots,\mb{h}_K] \in \mathbb{C}^{M \times K}$ is known at the BS, and $\mb{n}$ is the Gaussian noise vector comprising of i.i.d. $\mc{CN}(0,N_0)$ random variables. 

Unless otherwise stated, we assume the channel vector $\mb{h}_i \in \mathbb{C}^{M\times 1}$ associated with user $i$ follows a Gaussian distribution, $p(\mb{h}_i) = \mc{CN}(\mb{h}_i;\mb{0};\mb{R}_i)$, where $\mb{R}_i =\mathbb{E}[\mb{h}_i\mb{h}_i^H]$ denotes the channel covariance matrix. Note that $\mb{R}_i$ is generally not a scaled identity matrix; rather, it is typically modeled to capture the spatial correlation and large-scale fading between user $i$ and the BS. Denoting $\bs{\phi} = [\phi_1,\ldots,\phi_M]$ and $\bs{\theta} = [\theta_1,\ldots,\theta_K]$, the marginal conditional distribution $p(\mb{y}\!\mid\!\mb{s},\mb{H})$ is given by
	\begin{align} 
		p(\mb{y}\!\mid\!\mb{s};\mb{H}) &= \int_{\bs{\phi}}\int_{\bs{\theta}} p(\mb{y},\bs{\phi},\bs{\theta}\!\mid\!\mb{s};\mb{H})\md\bs{\theta}\md\bs{\phi} \nonumber \\
		& = \mathbb{E}_{\bs{\phi},\bs{\theta}} \big[p(\mb{y}\!\mid\!\mb{s},\bs{\phi},\bs{\theta};\mb{H})\big].
	\end{align}
	Assuming a uniform input distribution on $\mb{s}$, the optimal detection performance can be attained by maximizing the likelihood function $p(\mb{y}\!\mid\!\mb{s};\mb{H})$, 
    i.e.,
	\begin{align} \label{eq:ML:detector}
		\mb{s}^{\star} &= \arg\max_{\mb{s}\in\mc{A}^K} p(\mb{y}\!\mid\!\mb{s};\mb{H}) \nonumber \\
		&= \arg\max_{\mb{s}\in\mc{A}^K} \ln \mathbb{E}_{\bs{\phi},\bs{\theta}} \left[\e^{-N_0^{-1}\|\mb{y}-\bs{\Phi}\mb{H}\bs{\Theta}\mb{s}\|^2}\right].
	\end{align}
	However, the expectation in \eqref{eq:ML:detector} can not be obtained in closed-form since it involves a high-dimensional integration over the PN variables. It requires computing an integral over $K+M$ dimensions, which becomes extremely difficult for high dimensions in massive MIMO. Furthermore, the search space $|\mc{A}|^K$ grows prohibitively large with both the modulation size $|\mc{A}|$ and the number of users $K$, resulting in excessive detection complexity.

	\begin{remark}
		If all antennas at the BS share the same oscillator, the BS will be affected by a single source of phase noise, i.e, $\phi \triangleq \phi_1=\ldots=\phi_M$. This presents a simplified model of the system model in \eqref{eq:system_model}, with fewer parameters to estimate. Our proposed VB frameworks can be adopted for this common-oscillator scenario. 
	\end{remark}

\subsection{Preliminaries}
	To support the upcoming discussion, we provide some preliminary results from real analysis and probability, presented without proofs:
	\begin{itemize}
		\item For a complex number $z$,
		$$\Re\{\ji z\} = -\Im\{z\} = \Im\{z^*\}.$$
		\item For a von Mises RV $x \sim \mc{VM}(\eta)$, $\mu \triangleq \angle \eta$ is the mean direction and $\kappa \triangleq |\eta|$ is trob
        he concentration of $x$, analogous to the inverse of the variance. The PDF of $x$ is 
		$$\mc{VM}(x;\eta) = \frac{\exp\bigl(\kappa \cos(x-\mu)\bigr)}{2\pi I_0(\kappa)} \propto \exp\bigr(\Re\{\eta^* \e^{\ji x}\}\bigr).$$
		The 
        expectation and variance of $\e^{\ji n {x}}$ are
        \begin{align*} 
		    \mathbb{E}\big[\e^{\ji n{x}}\big] = \frac{I_{|n|} (\kappa)}{I_0(\kappa)} \e^{\ji n\mu}, \;\; \var\big[\e^{\ji n{x}}\big] = 1 - \left(\frac{I_{|n|}(\kappa)}{I_0(\kappa)}\right)^2,
		\end{align*}
		where $I_i(\kappa)$ is the modified Bessel function of the first kind of order $i$. For large value of $\kappa$ and specific order $n=1$, one can have $\frac{I_{1} (\kappa)}{I_{0} (\kappa)} \approx 1-\frac{1}{2\kappa}$.
		\item Let $x$ and $y$ be two independent RVs  with means $\hat{x}$ and $\hat{y}$ and variances $\sigma^2_{x}$ and $\sigma^2_{{y}}$, respectively. Then
		$$ \var\big[{x}{y}\big] = \big(|\hat{x}|^2 + \sigma^2_x\big) \big(|\hat{y}|^2 + \sigma^2_y\big)  - |\hat{x}|^2|\hat{y}|^2,$$
		and 
		$$ \var\big[\e^{\ji n{x}}{y}\big] = \var\big[\e^{\ji n {x}}\big]|\hat{y}|^2 + \sigma^2_y.$$
		\item Let $\mb{x} \in \!\mathbb{C}^{n\times 1}$ and $\mb{y}\in \!\mathbb{C}^{n\times 1}$ be independent RVs w.r.t. distribution $q_{\mb{x}, \mb{y}}(\mb{x},\mb{y}) = q(\mb{x})q(\mb{y})$. Denote $\hat{\mb{x}}$ and $\bs{\Sigma}_{\mb{x}}$ (respectively, $\hat{\mb{y}}$ and $\bs{\Sigma}_{\mb{y}}$) the variational mean and covariance of $\mb{x}$ (respectively, $\mb{y}$). For an arbitrary Hermitian matrix $\mb{B}$, let $\blr{(\mb{y}-\mb{A}\mb{x})^H\mb{B}(\mb{y}-\mb{A}\mb{x})}$ be the expectation of $(\mb{y}-\mb{A}\mb{x})^H\mb{B}(\mb{y}-\mb{A}\mb{x})$ w.r.t. $q_{\mb{x},\mb{y}}(\mb{x},\mb{y})$, then
		\begin{align} \label{f-ABC}
			&\blr{(\mb{y}-\mb{A}\mb{x})^H\mb{B}(\mb{y}-\mb{A}\mb{x})} = \big(\hat{\mb{y}}-{\mb{A}}\hat{\mb{x}}\big)^H\mb{B}\big(\hat{\mb{y}}-{\mb{A}}\hat{\mb{x}}\big) \nonumber \\
			&\qquad\qquad\qquad\quad + \tr\{\mb{B}\bs{\Sigma_{\mb{y}}}\} + \tr\big\{\bs{\Sigma}_{\mb{x}}{\mb{A}^H}\mb{B}{\mb{A}} \big\}.
		\end{align}
	\end{itemize}

\section{Variational Bayes for MIMO Detection with Phase Noises} \label{sec:VBforMIMO}
In this section, we develop VB methods to jointly infer the transmitted symbols $\mb{s}$, phase noises $\bs{\theta}$ and $\bs{\phi}$, from the observation $\mb{y}$, assuming a known channel $\mb{H}$. In practice, the noise variance/covariance is not known a priori and needs to be estimated. To this end, we estimate the input signal $\mb{x}$ considering two scenarios of postulated noise models: The postulated noise variance $N_0^{\rm post}$ and covariance $\mb{C}^{\rm post}$.
\subsection{MF-VB Algorithm} \label{sec:MF:von}
We assume a Gamma distribution $p(\gamma) = {\rm Gamma}(\alpha,\beta)$, as a conjugate prior for the precision parameter $\gamma$, where $\alpha$ and $\beta$ are the shape and rate parameters, respectively. 
Considering the precision $\gamma$ as a random variable, the conditional probability of observation $\mb{y}$ and latent variables $\mb{s},\bs{\phi}$, and $\bs{\theta}$ can be factorized as
\begin{align} \label{eq:con:gau:mf}
			p(\mb{y},\mb{s},\bs{\theta},\bs{\phi},\gamma) = p(\mb{y}\!\mid\!\mb{s},\bs{\theta},\bs{\phi},\gamma) p(\mb{s})p(\bs{\theta})p(\bs{\phi})p(\gamma),
\end{align}
where $p(\mb{y}\,|\,\mb{s},\bs{\phi},\bs{\theta},\gamma) = \mc{CN}\big(\bs{\Phi}^H\mb{y};\mb{H}\bs{\Theta}\mb{s} ,\gamma^{-1}\mb{I}_M\big)$.
We approximate the posterior using a mean-field factorization as
\begin{align}	p(\mb{s},\bs{\theta},\bs{\phi},\gamma\!\mid\!\mb{y};\mb{H}) &\approx  q(\mb{s},\bs{\theta},\bs{\phi},\gamma) \nonumber \\
&= \Bigg[\prod_{i=1}^K q(s_i)q({\theta}_i)\Bigg]  \prod_{m=1}^M q({\phi}_m)q(\gamma).
\end{align}
The variational factors are obtained by minimizing the Kullback-Leibler divergence to the true posterior. For the variational distribution of a random variable of interest $x$, the coordinate ascent update for the variational is given by
\begin{align}
		\label{eq:q_start_prop}
		q^{\star}(x) &\propto \mr{exp}\left\{\big\langle{\ln p (\mb{y},\mb{s}, \bs{\theta},\bs{\phi},\gamma)\big\rangle_{-x}}\right\},
	\end{align}
where the expectation is taken with respect to all variables except $x$. To enable tractable updates, we assume von Mises priors for $\theta_i$ and $\phi_m$, given by\footnote{Phase noise (increment) is typically modeled as Gaussian $\mc{N}(0,\gamma^{-1})$. For small innovation variance $\gamma^{-1}$, the phase noise (increment) can be indistinguishably modeled as von Mises $\mc{VM}(\gamma)$.}
\[p(\theta_i) = \mc{VM}(\theta_i;\varepsilon_{\mr{t},i}),\quad 
p(\phi_m) = \mc{VM}(\phi_m;\varepsilon_{\mr{r},m}).\]

For a \emph{current} estimate of the variational distributions $q(\mb{s})$, $q(\bs{\theta})$, $q(\bs{\phi})$, and $q(\gamma)$, we will obtain the corresponding variational means $\hat{\mb{s}}$, $\hat{\bs{\theta}}$,  $\hat{\bs{\phi}}$,  and $\hat{\gamma}$, respectively; and likewise $\hat{\bs{\Theta}}$ and $\hat{\bs{\Phi}}$. For ease of presentation and computational complexity, we define a residual term $\mb{r} =  \hat{\bs{\Phi}}^H\mb{y}  - \mb{H}\hat{\bs{\Theta}}\hat{\mb{s}}$ using these \emph{current} corresponding estimates and use it for updating the variational distributions in the following iteration.

	\emph{1) Update $\phi_m$:} Taking the expectation of the conditional in \eqref{eq:con:gau:mf} w.r.t. all latent variables except $\phi_m$, the variational distribution $q(\phi_m)$ can be derived as
	\begin{align}
		q(\phi_m) &\propto p(\phi_m)\exp\Big\{-\hat{\gamma}\blr{|\e^{-\ji \phi_m}y_m - \mb{H}_{m,:}\bs{\Theta}\mb{s}|^2}_{-\phi_m}\Big\}\nonumber\\
		&\propto p(\phi_m)\exp\Big\{\hat{\gamma}\blr{ 2\Re\{\e^{\ji \phi_m}y_m^*\mb{H}_{m,:}\bs{\Theta}\mb{s}\}}_{-\phi_m}\Big\}\nonumber\\
		&\propto p(\phi_m)\exp\bigl(\Re\{\eta_{\mr{r},m}^*\e^{\ji \phi_m}\}\bigr),
	\end{align}
	where 
	$\eta_{\mr{r},m} = 2\hat{\gamma}y_m\big(\mb{H}_{m,:}\hat{\bs{\Theta}}\hat{\mb{s}}\big)^*$.
	
	Since the prior $p(\phi_m)$ is assumed to be von Mises $\mc{VM}(\varepsilon_{\mr{r},m})$, $q(\phi_m)$ is also von Mises with mean $\hat{\phi}_m = \angle(\varepsilon_{\mr{r},m} + \eta_{\mr{r},m})$ and concentration $\kappa_{\mr{r},m} = |\varepsilon_{\mr{r},m} + \eta_{\mr{r},m}|$. Thus, we have $\blr{\e^{\pm \ji \phi_m}} = d_{\mr{r},m} \e^{\pm \ji \hat{\phi}_m}$, where $d_{\mr{r},m} = {I_1(\kappa_{\mr{r},m})}/{I_0(\kappa_{\mr{r},m})}$. As a result, $\eta_{\mr{r},m}$ can be rewritten as  
	\begin{align}
		\eta_{\mr{r},m} &= 2\hat{\gamma} y_m \big(\blr{\e^{-\ji{\phi}_m}}y_m - r_m\big)^* \nonumber \\
		&= 2\hat{\gamma} \big(|y_m|^2 d_{\mr{r},m} \e^{\ji\hat{\phi}_m} - y_mr_m^*\big). 
	\end{align} 
	
	
	
	\emph{2) Update $\theta_i$:} Taking the expectation of the conditional in \eqref{eq:con:gau:mf} w.r.t. all latent variables except $\theta_i$, the variational distribution $q(\theta_i)$ can be derived as
    \vspace{-0.1cm}
	\begin{align}
		q(\theta_i) &\propto p(\theta_i) \exp\!\Big\{-\hat{\gamma}\blr{\|\bs{\Phi}^H\mb{y} - \mb{H}\bs{\Theta}\mb{s}\|^2}_{-\theta_i}\Big\} \nonumber \\
		&\propto p(\theta_i) \exp\!\bigg\{2\hat{\gamma}\Re\bigg\{ \!\bigg(\!\hat{\bs{\Phi}}^H\mb{y} \!- \!\sum_{j\neq i}^K\! \mb{h}_j\blr{\e^{\ji \theta_j}}\hat{s}_j\!\bigg)^{\!H}\!\mb{h}_i\hat{s}_i\e^{\ji \theta_i}\!\bigg\}\!\bigg\} \nonumber  \\
		&\propto p(\theta_i) \exp\bigl(\Re\{\eta_{\mr{t},i}^*\e^{\ji \theta_i}\}\bigr),
	\end{align}
	where $\eta_{\mr{t},i} = 2\hat{\gamma}\mb{h}_i^H\left(\hat{\bs{\Phi}}^H\mb{y} - \sum_{j\neq i}^K \mb{h}_j\blr{\e^{\ji \theta_j}}\hat{s}_j\right)\hat{s}_i^*.$ 
	
	If the prior $p(\theta_i)$ is assumed to be von Mises $\mc{VM}(\varepsilon_{\mr{t},i})$, then the variational density $q(\theta_i)$ is also von Mises with mean $\hat{\theta}_i = \angle(\varepsilon_{\mr{t},i} + \eta_{\mr{t},i})$ and concentration $\kappa_{\mr{t},i} = |\varepsilon_{\mr{t},i} + \eta_{\mr{t},i}|$. Thus, we have $\lr{\e^{\pm \ji \theta_i}} = d_{\mr{t},i} \e^{\pm\ji \hat{\theta}_i}$ , where $d_{\mr{t},i} = {I_1(\kappa_{\mr{t},i})}/{I_0(\kappa_{\mr{t},i})}$.
	The term $\eta_{\mr{t},i}$ can be simplified as
	\begin{align}
		\eta_{\mr{t},i}  
		&= 2\hat{\gamma} \mb{h}_i^H\left(\mb{r} +  \mb{h}_i\blr{\e^{\ji \theta_i}}\hat{s}_i\right)\hat{s}_i^* \nonumber\\
		&=  2\hat{\gamma} \big(\|\mb{h}_i\|^2|\hat{s}_i|^2 d_{\mr{t},i} \e^{\ji \hat{\theta}_i} + \mb{h}_i^H\mb{r}\hat{s}_i^* \big).
	\end{align}

	\emph{3) Update $s_i$:} Taking the expectation of the conditional in \eqref{eq:con:gau:mf} w.r.t. all latent variables except $s_i$, the variational distribution $q(s_i)$ can be derived as \eqref{eq:q-x-vonMises},
	\begin{figure*} 
		\begin{align} \label{eq:q-x-vonMises}
			q(s_i) 
			&\propto p(s_i) 
			\exp\bigg\{-\hat{\gamma}\bblr{\bigg\|\bs{\Phi}^H\mb{y} - \sum_{j\neq i}^K \mb{h}_j\e^{\ji {\theta}_j}s_j - \mb{h}_i\e^{\ji {\theta}_i}s_i \bigg\|^2}_{-s_i}\bigg\} \nonumber \\
			&\propto p(s_i) \exp\bigg\{-\hat{\gamma}\bigg[\|\mb{h}_i\|^2 |s_i|^2 -  2\Re\bigg\{\bigg(\hat{\bs{\Phi}}^H\mb{y} - \sum_{j\neq i}^K  \mb{h}_j\blr{\e^{\ji {\theta}_i}}\hat{s}_j\bigg)^H\mb{h}_i\blr{\e^{\ji {\theta}_i}}s_i\bigg\}\bigg]\bigg\} \nonumber\\
			&\propto p(s_i) \exp\big\{-\hat{\gamma}\|\mb{h}_i\|^2|s_i-z_i|^2\big\},
		\end{align} 
		\hrule 
	\end{figure*} 
	where
    \vspace{-0.1cm}
	\begin{align} 
		z_i &= \frac{\mb{h}_i^H}{\|\mb{h}_i\|^2}\Bigg(\hat{\bs{\Phi}}^H\mb{y} - \sum_{j\neq i}^K \mb{h}_j\blr{\e^{\ji \theta_i}}\hat{s}_j\Bigg)\blr{\e^{-\ji \theta_i}}\nonumber \\
		&= \frac{\mb{h}_i^H}{\|\mb{h}_i\|^2} \left(\mb{r} +  \mb{h}_i\blr{\e^{\ji \theta_i}}\hat{s}_i\right)\blr{\e^{-\ji \theta_i}} \nonumber \\
		&= d_{\mr{t},i}^2\hat{s}_i + d_{\mr{t},i} \e^{-\ji \hat{\theta}_i}\frac{\mb{h}_i^H\mb{r}}{\|\mb{h}_i\|^2} .
	\end{align} 
	
	The variational mean $\hat{s}_i$ and variance $\tau_{s_i}$ are calculated as 
	$\hat{s}_i = \ms{F}_{s}\big(z_i,\hat{\gamma}\|\mb{h}_i\|^2\big) \triangleq \mathbb{E}\big[s_i\!\mid\! z_i = s_i \!+\! \mc{CN}\big(0,1/(\hat{\gamma} \|\mb{h}_i\|^2)\big)\big]$
    and $
    \tau_{s_i} = \ms{G}_{s}\big(z_i,\hat{\gamma}\|\mb{h}_i\|^2\big) \triangleq \mr{Var}\big[s_i\!\mid\! z_i = s_i \!+\! \mc{CN}\big(0,1/(\hat{\gamma} \|\mb{h}_i\|^2)\big)\big]$, respectively, where ${\sf F}_s(\cdot)$ and ${\sf G}_s(\cdot)$ are given in the Appendix B of \cite{nguyen2024variational}.
	
	\emph{4) Update $\gamma$:} Taking the expectation of the conditional in \eqref{eq:con:gau:mf} w.r.t. all latent variables except $\gamma$, the variational distribution $q(\gamma)$ can be obtained as
	\begin{align}
		q(\gamma) &\propto p(\gamma)\exp\left\{\ln p(\mb{y}\!\mid\!\mb{s},\bs{\phi},\bs{\theta},\gamma)\right\} \nonumber\\
		&\propto \exp\big\{M\ln\gamma - \gamma\blr{\|\bs{\Phi}^H\mb{y} - \mb{H}\bs{\Theta}\mb{s}\|^2 \nonumber\\
			&\quad+ (a_\gamma - 1)\ln \gamma - b_\gamma  \gamma}\big\}.
	\end{align}
	The variational distribution $q(\gamma)$ is thus Gamma with mean
	\begin{align} \label{eq:gamma:VM}
		\hat{\gamma} = \frac{ M + a_\gamma}{b_\gamma + \blr{\|\bs{\Phi}^H\mb{y} - \mb{H}\bs{\Theta}\mb{s}\|^2} }.
	\end{align} 
	Since the von Mises prior is used for $\theta_i$, $\blr{\|\bs{\Phi}^H\mb{y} - \mb{H}\bs{\Theta}\mb{s}\|^2}$ can be further derived as
	\begin{align}
		&\blr{\|\bs{\Phi}^H\mb{y} - \mb{H}\bs{\Theta}\mb{s}\|^2}   \nonumber\\
		&= \|\mb{r}\|^2 + \sum_{m=1}^M \mr{Var}\big[\e^{\ji \theta_m}y_m\big] + \sum_{i=1}^K\|\mb{h}_i\|^2\mr{Var}[\e^{\ji \mr{\theta}_i}s_i]\nonumber\\
		&= \|\mb{r}\|^2 + \sum_{m=1}^N|y_m|^2(1-d_{\mr{r},m}^2)  \nonumber \\
		&\quad+ \sum_{i=1}^K \|\mb{h}_i\|^2 \big(\tau_{s_i}  + (1-d_{\mr{t},i}^2)|\hat{s}_i|^2 \big).
	\end{align}
	
	The MF-VB algorithm with von Mises priors for $\theta_i$'s and $\phi_m$'s is summarized in Algorithm~\ref{algo-2}. The iteration index is only included in the superscripts of the estimated variables if the old (iteration $\ell$) and the new (iteration $\ell+1$) estimated values are needed later in the algorithm. In the VB framework, the estimation of a latent variable requires the latest updates of the other latent variables. These updates are reflected in the residual vector $\mb{r}$. Instead of updating the whole vector $\mb{r}$, which induces high complexity, we only update the contribution of the latent variable that has been most recently updated.
	\begin{algorithm}[!t]
		\small
		\caption{-- \textit{\textbf{MF-VB algorithm}} with postulated noise variance and von Mises prior}
		\label{algo-2}
		\begin{algorithmic}[1]
			\State \textbf{Input:} $\mb{y}$, $\mb{H}$, and priors $\big\{p(s_i)\big\}$, $\big\{p(\theta_i)\big\}$, $\big\{p(\phi_m)\big\}$\;
			\State \textbf{Output:} $\hat{\mb{s}}$, $\hat{\bs{\theta}}$, and $\hat{\bs{\phi}}$\;
			\State Initialize $\hat{s}_i^1 = 0$, $\tau_{s_i} = \mr{Var}_{p(s_i)}[s_i]$, $\forall i$; and $\mb{r} = \mb{y}$
			\State Initialize $\hat{\theta}_i^1 = 0$ and  $d_{\mr{t},i}^1 = {I_1(|\varepsilon_{\mr{t},i}|)}/{I_0(|\varepsilon_{\mr{t},i}|)}$,~$\forall i$\;
			\State Initialize $\hat{\phi}_m^1 = 0$, 
			and $d_{\mr{r},m}^1 = {I_1( |\varepsilon_{\mr{r},m}|)}/{I_0(|\varepsilon_{\mr{r},m}|)}$,~$\forall m$
			\For{$\ell=1,2,\ldots,T$}
			\State $\hat{\gamma} \leftarrow M/\big(\|\mb{r}\|^2 + \sum_{m=1}^M|y_m|^2\big(1-(d_{\mr{r},m}^\ell)^2\big)$
			\Statex \quad\quad\quad\quad\quad\quad $+ \sum_{i=1}^K \|\mb{h}_i\|^2 \left[ \tau_{s_i}^\ell +|\hat{s}_i^\ell|^2 (1- (d_{\mr{t},i}^\ell)^2) \right]\big)$
			\For{$i=1,2,\ldots,K$}
			\State $\displaystyle z_i \gets (d_{\mr{t},i}^\ell)^2 \hat{s}_i^\ell +  d_{\mr{t},i}^\ell \e^{-\ji \hat{\theta}_i^\ell} {\mb{h}_i^H\mb{r}}/{\|\mb{h}_i\|^2}$
			\State $\phantom{z_i}\mathllap{\hat{s}_i^{\ell+1}} \gets \mathbb{E}\big[s_i\!\mid\! z_i = s_i + \mc{CN}\big(0,1/(\hat{\gamma} \|\mb{h}_i\|^2)\big)\big]$\; 
			\State $\phantom{z_i}\mathllap{\tau_{s_i}^{\ell+1}} \gets \mr{Var}\big[s_i\!\mid\! z_i = s_i + \mc{CN}\big(0,1/(\hat{\gamma} \|\mb{h}_i\|^2)\big)\big]$\;
			\State $\phantom{z_i}\mathllap{\mb{r}} \gets \mb{r} + \mb{h}_i d_{\mr{t},i}^\ell\e^{\ji \hat{\theta}_i^\ell}(\hat{s}_i^\ell - \hat{s}_i^{\ell+1})$
			\State $\displaystyle\phantom{z_i}\mathllap{\eta_{\mr{t},i}} \gets  2\hat{\gamma} \big(\|\mb{h}_i\|^2|\hat{s}_i^{\ell+1}|^2 d_{\mr{t},i}^\ell \e^{\ji \hat{\theta}_i^\ell} + \mb{h}_i^H\mb{r}(\hat{s}_i^{\ell+1})^* \big)$ 
			\State $\phantom{z_i}\mathllap{\hat{\theta}_i^{\ell+1}} \gets \angle(\varepsilon_{\mr{t},i} + \eta_{\mr{t},i})$
			\State  $\phantom{z_i}\mathllap{\kappa_{\mr{t},i}^{\ell+1}} \gets |\varepsilon_{\mr{t},i} + \eta_{\mr{t},i}|$
			\State $\phantom{z_i}\mathllap{d_{\mr{t},i}^{\ell+1}} \gets {I_1(\kappa_{\mr{t},i}^{\ell+1})}/{I_0(\kappa_{\mr{t},i}^{\ell+1})}$
			\State $\phantom{z_i}\mathllap{\mb{r}} \gets \mb{r} + \mb{h}_i \hat{s}_i^{\ell+1}\big( d_{\mr{t},i}^{\ell}\e^{\ji \hat{\theta}_i^{\ell}} - d_{\mr{t},i}^{\ell+1}\e^{\ji \hat{\theta}_i^{\ell+1}}\big)$
		\EndFor
		\State $\bs{\eta}_{\mr{r}} \gets 2\hat{\gamma} \big(\mb{y}\odot\mb{y}^*\odot \mb{d}_{\mr{r}}^\ell\odot \e^{\ji\hat{\bs{\phi}}^\ell} - \mb{y}\odot\mb{r}^*\big)$
		\State $\phantom{\bs{\eta}_{\mr{r}}} \mathllap{\hat{\bs{\phi}}^{\ell+1}} \gets \angle(\bs{\varepsilon}_{\mr{r}} + \bs{\eta}_{\mr{r}})$ 
		\State $\phantom{\bs{\eta}_{\mr{r}}} \mathllap{\bs{\kappa}_{\mr{r}}^{\ell+1}} \gets |\bs{\varepsilon}_{\mr{r}} + \bs{\eta}_{\mr{r}}|$
		\State $\phantom{\bs{\eta}_{\mr{r}}} \mathllap{\mb{d}_{\mr{r}}^{\ell+1}} \gets {I_1(\bs{\kappa}_{\mr{r}}^{\ell+1})}./{I_0(\bs{\kappa}_{\mr{r}}^{\ell+1})}$
		\State $\phantom{\bs{\nu}_{\mr{r}}} \mathllap{\mb{r}} \gets \mb{r} -  \mb{y}\odot \big( \mb{d}_{\mr{r}}^{\ell}\odot\e^{-\ji \hat{\bs{\phi}}^{\ell}} - \mb{d}_{\mr{r}}^{\ell+1}\odot\e^{-\ji \hat{\bs{\phi}}^{\ell+1}} \big)$\;
		\EndFor
		\State $\forall i: \hat{s}_i \leftarrow \arg \max_{a\in \mathcal{A}}p_a \mathcal{CN}\big({z}_{i}; a,1/(\hat{\gamma}\|\mb{h}_i\|^2)\big)$\;
	\end{algorithmic}
\end{algorithm}

\begin{remark}
	For the case of a single phase noise source $\phi$ at the BS, the variational distribution of $\phi$ at iteration-$\ell$ can be found as
	\begin{align}\label{q-phi}
		q(\phi^{\ell+1}) &\propto p(\phi^{\ell+1})\exp\bigl(\Re\{\eta_{\mr{r}}^*\e^{\ji\phi^{\ell+1}}\}\bigr)\nonumber\\
        &= \mc{VM}\big(\phi^{\ell+1};\varepsilon_{\mr{r}} + \eta_{\mr{r}}\big),
	\end{align}
	where $\eta_{\mr{r}} = 2\hat{\gamma}\,\mb{y}^T \big(\mb{H}\hat{\bs{\Theta}}\hat{\mb{s}})^* = 2\hat{\gamma}\big(\|\mb{y}\|^2d_{\mr{r}}^\ell\e^{\ji\hat{\phi}^\ell} - \mb{r}^H\mb{y}\big)$ with $d_{\mr{r}}^\ell$ and $\hat{\phi}^\ell$ obtained from $q(\phi^\ell)$ at the previous iteration. The update of $\hat{\gamma}$ in Step 7 is also modified by replacing
	$\sum_{m=1}^M |y_m|^2\big(1-(d_{\mr{r},m}^\ell)^2\big)$ with $\big(1-(d_{\mr{r}}^\ell)^2\big)\|\mb{y}\|^2$.
\end{remark}

\subsection{LMMSE-VB Algorithm} \label{sec:LMMSE:von}

The noise precision matrix $\bs{\Gamma} = (\mb{C}^{\rm post})^{-1}$ is floated as a random matrix.
The joint distribution is given by 
\begin{align} \label{eq:con:gau:lmmse}
p(\mb{y},\mb{s},\bs{\theta},\bs{\phi},\bs{\Gamma}) = p(\mb{y}\!\mid\!\mb{s},\bs{\theta},\bs{\phi},\bs{\Gamma}) p(\mb{s})p(\bs{\theta})p(\bs{\phi})p(\bs{\Gamma}),
\end{align}
where $p(\mb{y}\!\mid\!\mb{s},\bs{\phi},\bs{\theta},\bs{\Gamma}) = \mc{CN}\big(\bs{\Phi}^H\mb{y};\mb{H}\bs{\Theta}\mb{s} ,\bs{\Gamma}^{-1}\big)$. We approximate the posterior distribution using a mean-field factorization
\begin{align}
p(\mb{s},\bs{\theta},\bs{\phi},\bs{\Gamma}\!\mid\!\mb{y};\mb{H}) &\approx q(\mb{s},\bs{\theta},\bs{\phi},\bs{\Gamma}) \nonumber \\
&=\Bigg[\prod_{i=1}^K q(s_i)q({\theta}_i)\Bigg] \prod_{m=1}^M q({\phi}_m) q(\bs{\Gamma}).
\end{align}

\emph{1) Update $\phi_m$:} Taking the expectation of the conditional in \eqref{eq:con:gau:lmmse} w.r.t. all latent variables except $\phi_m$, the variational distribution $q(\phi_m)$ can be derived as 
\begin{align}
	&	q(\phi_m) \nonumber \\
	&\propto p(\phi_m) \exp\Big\{\!\!-\!\blr{(\bs{\Phi}^H\mb{y}\! -\! \mb{H}\bs{\Theta}\mb{s})^H\bs{\Gamma}(\bs{\Phi}^H\mb{y}\! -\! \mb{H}\bs{\Theta}\mb{s})}_{-\phi_m}\Big\} \nonumber\\
	&\propto p(\phi_m)\exp\bigg\{2\,\Re\big\{\e^{\ji \phi_m}y_m^*\big[\hat{\bs{\Gamma}}\big]_{mm}\mb{H}_{m,:}\hat{\bs{\Theta}}\hat{\bs{\mb{s}}}\big\} \nonumber \\
	&\quad\quad\quad\quad - 2\! \sum_{n\neq m}^M\! \Re\big\{\e^{\ji \phi_m}y_m^*\big[\hat{\bs{\Gamma}}\big]_{mn}\big(\blr{\e^{-\ji \theta_n}}y_n-\mb{H}_{n,:}\hat{\bs{\Theta}}\hat{\mb{s}}\big)\big\}\!\bigg\}\nonumber\\
	&\propto p(\phi_m) \exp \big\{\Re\big\{\eta_{\mr{r},m}^*\e^{\ji \phi_m}\big\}\big\},
\end{align}
where we define 
\begin{align}
	\eta_{\mr{r},m} &= 2y_m \bigg(\big[\hat{\bs{\Gamma}}\big]_{mm}\mb{H}_{m,:}\hat{\bs{\Theta}}\hat{\bs{\mb{s}}}\nonumber \\
	&\quad \quad \quad -\sum_{n\neq m}^M \big[\hat{\bs{\Gamma}}\big]_{mn}\big(\blr{\e^{-\ji \theta_n}}y_n-\mb{H}_{n,:}\hat{\bs{\Theta}}\hat{\mb{s}}\big)\bigg)^*  \nonumber \\
	&= 2y_m \bigg(\big[\hat{\bs{\Gamma}}\big]_{mm}\blr{\e^{-\ji \phi_m}}y_m \nonumber \\
	&\quad \quad \quad -\sum_{n=1}^M \big[\hat{\bs{\Gamma}}\big]_{mn}\big(\blr{\e^{-\ji \theta_n}}y_n-\mb{H}_{n,:}\hat{\bs{\Theta}}\hat{\mb{s}}\big)\bigg)^*  \nonumber \\
	&= 2y_m \big(\big[\hat{\bs{\Gamma}}\big]_{mm}d_{\mr{r},m} \e^{-\ji \hat{\phi}_m}y_m - [\hat{\bs{\Gamma}}]_{:,m}^H\mb{r}\big)^* \nonumber\\
	&= 2\big[\hat{\bs{\Gamma}}\big]_{mm}|y_m|^2 d_{\mr{r},m} \e^{\ji \hat{\phi}_m} - 2y_m \mb{r}^H[\hat{\bs{\Gamma}}]_{:,m}.
\end{align}

If the prior is von Mises $\mc{VM}(\varepsilon_{\mr{r},m})$, $q(\phi_m)$ is also von Mises with mean $\hat{\phi}_n = \angle(\varepsilon_{\mr{r},m} + \eta_{\mr{r},m})$ and concentration $\kappa_{\mr{r},m} = |\nu_{\mr{r},m} + \eta_{\mr{r},m}|$. 

\emph{2) Update $\theta_i$:} The variational distribution $q(\theta_i)$ can be derived as in \eqref{eq:q:theta:lmmse:VM},
\begin{figure*}
	\begin{align} \label{eq:q:theta:lmmse:VM}
		q(\theta_i) 
		&\propto p(\theta_i) \exp\Big\{-\blr{\big(\bs{\Phi}^H\mb{y} - \mb{H}\bs{\Theta}\mb{s}\big)\bs{\Gamma}\big(\bs{\Phi}^H\mb{y} - \mb{H}\bs{\Theta}\mb{s}\big)}_{-\theta_i}\Big\} \nonumber\\
		&\propto p(\theta_i) \exp\bigg\{-\bblr{\bigg(\bs{\Phi}^H\mb{y} - \sum_{j\neq i}^K \mb{h}_j\e^{\ji \theta_j}s_j - \mb{h}_i\e^{\ji \theta_i}s_i\bigg)^H\bs{\Gamma}\bigg(\bs{\Phi}^H\mb{y} - \sum_{j\neq i}^K \mb{h}_j\e^{\ji \theta_j}s_j - \mb{h}_i\e^{\ji \theta_i}s_i\bigg)}_{-\theta_i}\bigg\} \nonumber\\
		&\propto p(\theta_i) \exp\bigg\{2\Re\bigg\{\bigg(\hat{\bs{\Phi}}^H\mb{y} - \sum_{j\neq i}^K \mb{h}_j\blr{\e^{\ji \theta_j}}\hat{s}_j\bigg)^H\hat{\bs{\Gamma}}\mb{h}_i\hat{s}_i\e^{\ji \theta_i}\bigg\}\bigg\} \nonumber\\
		&\propto p(\theta_i) \exp\big\{\Re\{\eta_{\mr{t},i}^*\e^{\ji \theta_i}\}\big\},
	\end{align}
	\hrule
	\begin{align} \label{eq:q:s_i:lmmse:vm}
		q(s_i) 
		&\propto p(s_i) \exp\Big\{-\blr{\big(\bs{\Phi}^H\mb{y} - \mb{H}\bs{\Theta}\mb{s}\big)\bs{\Gamma}\big(\bs{\Phi}^H\mb{y} - \mb{H}\bs{\Theta}\mb{s}\big)}_{-s_i}\Big\} \nonumber\\
		&\propto p(s_i) \exp\bigg\{-\bblr{\bigg(\bs{\Phi}^H\mb{y} - \sum_{j\neq i}^K \mb{h}_j\e^{\ji \theta_j}s_j - \mb{h}_i\e^{\ji \theta_i}s_i\bigg)^H\bs{\Gamma}\bigg(\bs{\Phi}^H\mb{y} - \sum_{j\neq i}^K \mb{h}_j\e^{\ji \theta_j}s_j - \mb{h}_i\e^{\ji \theta_i}s_i\bigg)}_{-s_i}\bigg\}\nonumber \\
		&\propto p(s_i) \exp\bigg\{-\mb{h}_i^H\hat{\bs{\Gamma}}\mb{h}_i |s_i|^2 + 2\Re\bigg\{\bigg(\hat{\bs{\Phi}}^H\mb{y} - \sum_{j\neq i}^K \mb{h}_j\blr{\e^{\ji \theta_j}}\hat{s}_j\bigg)^H\hat{\bs{\Gamma}}\mb{h}_i\blr{\e^{\ji \theta_i}}s_i\bigg\}\bigg\} \nonumber\\
		&\propto p(s_i)\exp \big\{-\mb{h}_i^H\hat{\bs{\Gamma}}\mb{h}_i \big(|s_i|^2 + 2\Re\{z_i^*s_i\}\big)\big\} \nonumber \\
		&\propto p(s_i)\exp\big\{-\mb{h}_i^H\hat{\bs{\Gamma}}\mb{h}_i|s_i-z_i|^2 \big\},
	\end{align}
	\hrule
\end{figure*} 
where 
\begin{align}
	\eta_{\mr{t},i} = 2\mb{h}_i^H\hat{\bs{\Gamma}}\left(\hat{\bs{\Phi}}^H\mb{y} - \sum_{j\neq i}^K \mb{h}_j\blr{\e^{\ji \theta_j}}\hat{s}_j\right)\hat{s}_i^*.  
\end{align}

If the prior $p(\theta_i)$ is von Mises $\mc{VM}(\varepsilon_{\mr{t},i})$, then the variational density $q(\theta_i)$ is also von Mises with mean $\hat{\theta}_i = \angle(\varepsilon_{\mr{t},i} + \eta_{\mr{t},i})$ and concentration $\kappa_{\mr{t},i} = |\varepsilon_{\mr{t},i} + \eta_{\mr{t},i}|$.
The term $\eta_{\mr{t},i}$ can be simplified as
\begin{align}
	\eta_{\mr{t},i} &= 2\mb{h}_i^H \hat{\bs{\Gamma}}\bigg(\hat{\bs{\Phi}}^H\mb{y} - \sum_{j\neq i}^K \mb{h}_j\blr{\e^{\ji \theta_j}}\hat{s}_j\bigg)\hat{s}_i^* \nonumber \nonumber\\
	&= 2 \mb{h}_i^H\hat{\bs{\Gamma}}\left(\mb{r} +  \mb{h}_i\blr{\e^{\ji \theta_i}}\hat{s}_i\right)\hat{s}_i^* \nonumber\\
	&=  2 \big(\mb{h}_i^H\hat{\bs{\Gamma}}\mb{h}_i|\hat{s}_i|^2 d_{\mr{t},i} \e^{\ji \hat{\theta}_i} + \mb{h}_i^H\hat{\bs{\Gamma}}\mb{r}\hat{s}_i^*\big).
\end{align}

\emph{3) Update $s_i$:} Taking the expectation of the conditional in \eqref{eq:con:gau:lmmse} w.r.t. all latent variables except $s_i$, the variational distribution $q(s_i)$ can be derived as in \eqref{eq:q:s_i:lmmse:vm},
where
\begin{align} 
	z_i &= \frac{\mb{h}_i^H\hat{\bs{\Gamma}}\Big(\hat{\bs{\Phi}}^H\mb{y} - \sum_{j\neq i}^K \mb{h}_j\blr{\e^{\ji \theta_j}}\hat{s}_j\Big)\blr{\e^{-\ji \theta_i}}}{\mb{h}_i^H\hat{\bs{\Gamma}}\mb{h}_i} \nonumber\\
	&= \frac{\mb{h}_i^H\hat{\bs{\Gamma}}\big(\mb{r} + \mb{h}_i\blr{\e^{\ji \theta_i}} \hat{s}_i\big)\blr{\e^{-\ji \theta_i}}}{\mb{h}_i^H\hat{\bs{\Gamma}}\mb{h}_i} \nonumber\\
	&= d_{\mr{t},i}^2\hat{s}_i + d_{\mr{t},i} \e^{-\ji \hat{\theta}_i}\frac{\mb{h}_i^H\hat{\bs{\Gamma}}\mb{r}}{\mb{h}_i^H\hat{\bs{\Gamma}}\mb{h}_i}.
\end{align}

\emph{4) Update $\bs{\Gamma}$:} The variational distribution $q(\bs{\Gamma})$ can be derived as
\begin{align}
	q(\bs{\Gamma}) &\propto p(\bs{\Gamma})\exp\left\{\ln p(\mb{y}\!\mid\!\mb{s},\bs{\theta},\bs{\phi},\bs{\Gamma})\right\} \nonumber \\
	&\propto p(\bs{\Gamma})\exp \big\{\ln|\bs{\Gamma}| \nonumber\\
	&\quad - \blr{(\bs{\Phi}^H\mb{y} - \mb{H}\bs{\Theta}\mb{s})^H\bs{\Gamma}(\bs{\Phi}^H\mb{y} - \mb{H}\bs{\Theta}\mb{s})}\big\}.
\end{align}
Applying Lemma 1 in \cite{nguyen2024variational} with deterministic of $\mb{H}$, we have
\begin{align}
	&\blr{(\bs{\Phi}^H\mb{y} - \mb{H}\bs{\Theta}\mb{s})^H\bs{\Gamma}(\bs{\Phi}^H\mb{y} - \mb{H}\bs{\Theta}\mb{s})}\nonumber\\
	&= (\hat{\bs{\Phi}}^H\mb{y} - \mb{H}\hat{\bs{\Theta}}\hat{\mb{s}})^H\bs{\Gamma}(\hat{\bs{\Phi}}^H\mb{y} - \mb{H}\hat{\bs{\Theta}}\hat{\mb{s}}) \nonumber \\
	&\quad + \tr\{\bs{\Sigma}_{\bs{\Phi}^H\mb{y}}\bs{\Gamma}\}  +  \tr\{ \mb{H}\bs{\Sigma}_{\bs{\Theta}\mb{s}}\mb{H}^H\bs{\Gamma}\} \nonumber \\
	& = \tr\big\{\big(\mb{r}\mb{r}^H + \bs{\Sigma}_{\bs{\Phi}^H\mb{y}} + \mb{H}\bs{\Sigma}_{\bs{\Theta}\mb{s}}\mb{H}^H\big)\bs{\Gamma}\big\},
\end{align}
where
\begin{align}
	\bs{\Sigma}_{\bs{\Phi}^H\mb{y}} &= \mr{diag}\big(\var[\e^{-\ji\phi_1}y_1], \ldots, \var[\e^{-\ji\phi_M}y_M]\big) \nonumber \\
	&= \mr{diag}\big(|y_1|^2(1-d_{\mr{r},1}^2),\ldots, |y_M|^2(1-d_{\mr{r},M}^2)\big),
\end{align}
and 
\begin{align}
	\bs{\Sigma}_{\bs{\Theta}\mb{s}} &= \mr{diag}\big(\var[\e^{\ji\theta_1}s_1], \ldots, \var[\e^{\ji\theta_K}s_K]\big) \nonumber \\ 
	&= \mr{diag}\big(\tau_{s_1}\!+\!(1-d_{\mr{t},1}^2)|\hat{s}_1|^2,\ldots, \tau_{s_K}\!+\!(1-d_{\mr{t},K}^2)|\hat{s}_K|^2\big).
\end{align}
Due to the rank deficiency of $\blr{(\bs{\Phi}^H\mb{y} - \mb{H}\bs{\Theta}\mb{s})^H\bs{\Gamma}(\bs{\Phi}^H\mb{y} - \mb{H}\bs{\Theta}\mb{s})}$, and following \cite{nguyen2024variational}, we propose the estimator for $\hat{\bs{\Gamma}}$ as 
\begin{align} \label{eq:Gamma:Gau}
	\hat{\bs{\Gamma}} = \left(\frac{\|\mb{r}\|^2}{M}\mb{I}_M + \bs{\Sigma}_{\bs{\Phi}^H\mb{y}} + \mb{H}\bs{\Sigma}_{\bs{\Theta}\mb{s}}\mb{H}^H \right)^{-1}.
\end{align}
The LMMSE-VB process with von Mises prior is summarized in Algorithm~\ref{algo-3}.

\begin{remark}
	For the case of one phase noise source $\phi$ at the BS, the variational distribution of $\phi$ at iteration-$\ell$ can be found as in \eqref{q-phi},
	where $\eta_{\mr{r}} = 2\mb{y}^T\hat{\bs{\Gamma}} \big(\mb{H}\hat{\bs{\Theta}}\hat{\mb{s}})^* = 2\big( \mb{y}^H\hat{\bs{\Gamma}}\mb{y}d_{\mr{r}}^\ell\e^{\ji\hat{\phi}^\ell} - \mb{r}^H\hat{\bs{\Gamma}}\mb{y}\big)$ with 
	$d_{\mr{r}}^\ell$ and $\hat{\phi}^\ell$ obtained from $q(\phi^\ell)$ at the previous iteration. The update of $\hat{\bs{\Gamma}}$ at iteration-$\ell$ is also modified by assigning
	$\hat{\bs{\Sigma}}^{\ell}_{\bs{\Phi}^H\mb{y}} = (1-(d_{\mr{r}}^\ell)^2)\mb{y}\mb{y}^H$.
\end{remark}

\begin{algorithm}[!th]
	\small
	\caption{-- \textit{\textbf{LMMSE-VB algorithm}} with postulated noise covariance and von Mises prior}
	\label{algo-3}
	\begin{algorithmic}[1]
		\State \textbf{Input:} $\mb{y}$, $\mb{H}$, and priors $\big\{p(s_i)\big\}$, $\big\{p(\theta_i)\big\}$, $\big\{p(\phi_m)\big\}$\;
		\State \textbf{Output:} $\hat{\mb{s}}$, $\hat{\bs{\theta}}$, and $\hat{\bs{\phi}}$\;
		\State Initialize $\hat{s}_i^1 = 0$, $\tau_{s_i} = \mr{Var}_{p(s_i)}[s_i]$, $\forall i$; and $\mb{r} = \mb{y}$
		\State Initialize $\hat{\theta}_i^1 = 0$ and  $d_{\mr{t},i}^1 = {I_1(|\varepsilon_{\mr{t},i}|)}/{I_0(|\varepsilon_{\mr{t},i}|)}$,~$\forall i$\;
		\State Initialize $\hat{\phi}_m^1 = 0$, 
		and $d_{\mr{r},m}^1 = {I_1( |\varepsilon_{\mr{r},m}|)}/{I_0(|\varepsilon_{\mr{r},m}|)}$,~$\forall m$
		\For{$\ell=1,2,\ldots$}
		\State $\displaystyle \bs{\Sigma}_{\bs{\Phi}^H\mb{y}}^\ell \gets \mr{diag}\big(|y_1|^2(1-(d_{\mr{r},1}^{\ell})^2),\ldots, |y_M|^2(1-(d_{\mr{r},M}^\ell)^2)\big)$
		\State $\phantom{\bs{\Sigma}_{\bs{\Phi}^H\mb{y}}^\ell} \mathllap{\bs{\Sigma}_{\bs{\Theta}\mb{s}}^\ell} \gets \mr{diag}\big( \tau_{s_1}^\ell + |\hat{s}_1^\ell|^2(1-(d_{\mr{t},1}^\ell)^2),\ldots, \tau_{s_K}^\ell +  |\hat{s}_K^\ell|^2(1-(d_{\mr{t},K}^\ell)^2)\big)$
		\State $\displaystyle \phantom{\bs{\Sigma}_{\bs{\Phi}^H\mb{y}}^\ell} \mathllap{\hat{\bs{\Gamma}}} \gets \left(\frac{\|\mb{r}\|^2}{M}\mb{I}_M + \bs{\Sigma}_{\bs{\Phi}^H\mb{y}}^\ell + \mb{H}\bs{\Sigma}_{\bs{\Theta}\mb{s}}^\ell\mb{H}^H\right)^{-1}$
		\For{$i=1,2,\ldots,K$}
			\State $\displaystyle z_i \gets (d_{\mr{t},i}^\ell)^2 \hat{s}_i +  d_{\mr{t},i}^\ell \e^{-\ji \hat{\theta}_i} {\mb{h}_i^H\hat{\bs{\Gamma}}\mb{r}}/{\mb{h}_i^H\hat{\bs{\Gamma}}\mb{h}_i}$
			\State $\phantom{z_i}\mathllap{\hat{s}_i^{\ell+1}} \gets \mathbb{E}\big[s_i \!\mid\! z_i = s_i + \mc{CN}\big(0,1/\mb{h}_i^H\hat{\bs{\Gamma}}\mb{h}_i\big)\big]$\; 
			\State $\phantom{z_i}\mathllap{\tau_{s_i}^{\ell+1}} \gets \mr{Var}\big[s_i \!\mid\! z_i = s_i + \mc{CN}\big(0,1/\mb{h}_i^H\hat{\bs{\Gamma}}\mb{h}_i\big)\big]$\; 
			\State $\phantom{z_i}\mathllap{\mb{r} }\gets \mb{r} + \mb{h}_i d_{\mr{t},i}^\ell \e^{\ji \hat{\theta}_i^\ell}(\hat{s}_i^\ell - \hat{s}_i^{\ell+1})$
			\State $\displaystyle \phantom{z_i}\mathllap{\eta_{\mr{t},i}} \gets  2 \big(\mb{h}_i^H\hat{\bs{\Gamma}}\mb{h}_i|\hat{s}_i^{\ell+1}|^2 d_{\mr{t},i}^\ell \e^{\ji \hat{\theta}_i^\ell} + \mb{h}_i^H\hat{\bs{\Gamma}}\mb{r}(\hat{s}_i^{\ell+1})^*\big)$ 
			\State $\phantom{z_i}\mathllap{\hat{\theta}_i^{\ell+1}} \gets \angle(\varepsilon_{\mr{t},i} + \eta_{\mr{t},i})$ 
			\State $\phantom{z_i}\mathllap{\kappa_{\mr{t},i}^{\ell+1}} \gets |\varepsilon_{\mr{t},i} + \eta_{\mr{t},i}|$
				\State $\phantom{z_i}\mathllap{d_{\mr{t},i}^{\ell+1}} \gets {I_1(\kappa_{\mr{t},i}^{\ell+1})}/{I_0(\kappa_{\mr{t},i}^{\ell+1})}$
			\State $\phantom{z_i}\mathllap{\mb{r}} \gets \mb{r} + \mb{h}_i \hat{s}_i^{t+1}\big( d_{\mr{t},i}^{\ell}\e^{\ji \hat{\theta}_i^{\ell}} - d_{\mr{t},i}^{\ell+1}\e^{\ji \hat{\theta}_i^{\ell+1}}\big)$
		\EndFor
		\For{$m=1,2,\ldots,M$}
		\State $\eta_{\mr{r},m} \gets  2\big[\hat{\bs{\Gamma}}\big]_{mm}|y_m|^2 d_{\mr{r},m}^\ell \e^{\ji \hat{\phi}_m^\ell } - 2y_m \mb{r}^H[\hat{\bs{\Gamma}}]_{:,m}$
		\State $\phantom{\eta_{\mr{r},m}}\mathllap{\hat{\phi}_n^{\ell+1}} \gets \angle(\varepsilon_{\mr{r},m} + \eta_{\mr{r},m})$
		\State $\phantom{\eta_{\mr{r},m}}\mathllap{\kappa_{\mr{r},m}^{\ell+1}} \gets |\varepsilon_{\mr{r},m} + \eta_{\mr{r},m}|$
		\State $\phantom{\eta_{\mr{r},m}} \mathllap{d_{\mr{r},m}^{\ell+1}} \gets {I_1(\kappa_{\mr{r},m}^{\ell+1})}/{I_0(\kappa_{\mr{r},m}^{\ell+1})}$
		\State $\phantom{\nu_{\mr{r},m}}\mathllap{{r}_m} \gets {r}_m - y_m\big( d_{\mr{r},m}^{\ell}\e^{-\ji \hat{{\phi}}_m^{\ell}} - d_{\mr{r},m}^{\ell+1}\e^{-\ji \hat{{\phi}}_m^{\ell+1}} \big)$
		\EndFor
		\EndFor
		\State $\forall i: \hat{s}_i \leftarrow \arg \max_{a\in \mathcal{A}}p_a \mathcal{CN}\big(z_{i}; a,1/\mb{h}_i^H\hat{\bs\Gamma}\mb{h}_i\big)$\;
	\end{algorithmic}
\end{algorithm}

\section{Variational Bayes for Inferring $\mb{x}$} \label{sec:ImprovedMF}
In Section \ref{sec:VBforMIMO}, VB algorithms have been developed to jointly estimate the transmitted symbols $\mb{s}$ and the phase noises $\bs{\theta}$ and $\bs{\phi}$. However, the multiplicative coupling between $\mb{s}$ and $\bs{\theta}$ in the posterior updates introduces nontrivial dependencies that may reduce the detection performance.
To address this issue, we propose an alternative formulation in which the transmitter phase noise is absorbed directly into the transmitted signal, defining the composite variable $x_i$ as shown in \eqref{eq:x_i}. Rather than estimating $s_i$ and $\theta_i$  separately, the VB framework is applied directly to infer $x_i$. As established in Theorem 1 and Corollary 1, this reformulation admits a tractable closed-form posterior, enabling efficient variational inference with reduced posterior mismatch and improved detection performance.

\begin{theorem}[Posteriors of $s$, $\theta$, and $x$ given $z$]
\label{thm:main}
Let ${s \in \mc{S}=\{s_1,\ldots,s_N\}\subset\mathbb{C}}$ be a discrete RV with prior probabilities $p_k = P(s = s_k)$.  Let $\theta \sim \mc{VM}(\varepsilon)$ be a phase noise RV with PDF $p(\theta;\varepsilon) \propto \exp\bigl(\Re\{\varepsilon^*\e^{\ji\theta}\}\bigr)$. 
The variables $s$ and $\theta$ are mutually independent. Let $z$ be a noisy observation of $x\triangleq s\,\e^{\ji\theta}$, i.e.,
\begin{equation}
	z = x + n = s\,e^{\ji\theta} + n,
	\quad n \sim \mathcal{CN}(0,\,\gamma^{-1}),
	\label{eq:model}
\end{equation}
where $n$ is independent of $s$ and $\theta$. For each $k \in \{1,\ldots,N\}$, define the \emph{posterior natural parameter}
\begin{equation}
\eta_k \triangleq \varepsilon + 2\gamma\,z\,s_k^*.	\label{eq:eta}
\end{equation}
The following results hold:

\noindent (I) \emph{Signal prior $p(x)$:} 
\begin{align}
	p(x) = \sum_{k=1}^N p_k\cdot \frac{\exp\big(\Re\{\varepsilon^* \e^{\ji (\angle x - \angle s_k)}\}\big)}{2\pi|s_k|I_0(|\varepsilon|)} \cdot \delta(|x| - |s_k|).
	\tag{I}
	\label{eq:x}
\end{align}
\noindent (II) \emph{Symbol posterior $p(s_k \!\mid\! z)$:}
\begin{equation}
\alpha_k \triangleq P(s = s_k \!\mid\! z) \propto p_k
		\cdot\exp\bigl(-\gamma\,|s_k|^2\bigr)
		\cdot I_0(|\eta_k|),
	\tag{II}
	\label{eq:ps}
\end{equation}
normalized so that $\sum_{k=1}^N \alpha_k = 1$.

\noindent (III) \emph{Phase posterior $p(\theta \!\mid\! z)$:}
\begin{equation}
	p(\theta \!\mid\! z) = \sum_{k=1}^{N} \alpha_k\cdot \mc{VM}(\theta;\eta_k), 
		\quad \theta\in[-\pi,\pi),
	\tag{III}
	\label{eq:ptheta}
\end{equation}
a finite mixture of von Mises distributions with a conditional mean 
\begin{equation}
	\mathbb{E}[\theta\!\mid\! z] = \scalebox{1.5}{$\angle$} \left(\sum_{k=1}^N \alpha_k \cdot \frac{I_1(|\eta_k|)}{I_0(|\eta_k|)} \cdot \e^{\ji\angle \eta_k}\right).
\end{equation}

\noindent(IV) \emph{Signal posterior $p(x \!\mid\! z)$:}
\begin{equation}
{p(x \!\mid\! z)
		= \sum_{k=1}^{N} \alpha_k \cdot
		\frac{\exp\bigl(\Re\bigl\{\eta_k^*\e^{\ji(\angle x - \angle s_k)}\bigr\}\bigr)}
		{2\pi\,|s_k|\,I_0(|\eta_k|)}
		\cdot \delta(|x| - |s_k|),}
	\tag{IV}
	\label{eq:px}
\end{equation}
a finite mixture of von Mises distributions on concentric circles of radii $\{|s_k|\}$.
\end{theorem}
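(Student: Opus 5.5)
The plan is to work with the joint density of $(s,\theta)$ given $z$ and obtain every part by marginalization or change of variables. By independence and Bayes' rule,
\[
p(s_k,\theta\mid z)\propto p_k\,\frac{\exp(\Re\{\varepsilon^*\e^{\ji\theta}\})}{2\pi I_0(|\varepsilon|)}\exp\bigl(-\gamma|z-s_k\e^{\ji\theta}|^2\bigr).
\]
Expanding the square gives $|z|^2+|s_k|^2-2\Re\{z^*s_k\e^{\ji\theta}\}$, and $\Re\{z^*s_k\e^{\ji\theta}\}=\Re\{(zs_k^*)^*\e^{\ji\theta}\}$. Dropping the constant $|z|^2$ and the common factor $2\pi I_0(|\varepsilon|)$, the joint becomes
\[
p(s_k,\theta\mid z)\propto p_k\,\e^{-\gamma|s_k|^2}\exp\bigl(\Re\{\eta_k^*\e^{\ji\theta}\}\bigr),
\]
with $\eta_k=\varepsilon+2\gamma z s_k^*$ exactly as in \eqref{eq:eta}. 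This identity is the only real computation; everything else follows from the von Mises normalization $\int_{-\pi}^{\pi}\exp(\Re\{\eta^*\e^{\ji\theta}\})\,\md\theta=2\pi I_0(|\eta|)$, which is implicit in the preliminaries.

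For (II), integrate the joint over $\theta$. This gives $p_k\e^{-\gamma|s_k|^2}\,2\pi I_0(|\eta_k|)$, and normalizing over $k$ yields $\alpha_k$.

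For (III), sum the joint over $k$. Writing each term as $\alpha_k$ times the normalized density $\exp(\Re\{\eta_k^*\e^{\ji\theta}\})/(2\pi I_0(|\eta_k|))=\mc{VM}(\theta;\eta_k)$ gives the mixture. For the stated conditional mean, compute $\mathbb{E}[\e^{\ji\theta}\mid z]=\sum_k\alpha_k\frac{I_1(|\eta_k|)}{I_0(|\eta_k|)}\e^{\ji\angle\eta_k}$ using the preliminary moment formula with $n=1$. Its angle is the circular mean direction, so $\mathbb{E}[\theta\mid z]$ should be read as the circular mean.

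For (I) and (IV), we need the density of $x=s\e^{\ji\theta}$ with respect to Lebesgue measure on $\mathbb{C}$, written in polar coordinates $x=\rho\e^{\ji\psi}$ with $\md x=\rho\,\md\rho\,\md\psi$. Conditioned on $s=s_k$, $x$ lies on the circle of radius $|s_k|$ with phase $\psi=\angle s_k+\theta$. Its density is therefore $f(\psi-\angle s_k)\,\delta(\rho-|s_k|)/\rho$, where $f$ is the density of $\theta$. The $1/\rho=1/|s_k|$ factor is the Jacobian, and checking it is the step where care is needed, in particular the convention that the delta is in the radial variable. Applying this with $f=\mc{VM}(\cdot;\varepsilon)$ and weights $p_k$ gives (I). Applying it with the conditional $f=\mc{VM}(\cdot;\eta_k)$ from (III) and weights $\alpha_k$ gives (IV), since $p(x\mid z)=\sum_k P(s_k\mid z)\,p(x\mid s_k,z)$ and $p(\theta\mid s_k,z)=\mc{VM}(\theta;\eta_k)$.

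Two degenerate cases need a remark. If two constellation points share the same modulus and their phases differ by the rotation absorbed into $\theta$, the circles coincide and the mixture components simply add. The point $s_k=0$ must be excluded, or treated as a point mass at the origin.

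The main obstacle is not algebraic but measure-theoretic: the polar Jacobian and the delta-function normalization in (I) and (IV). I would verify them by checking that each component integrates to one, since $\int\!\!\int \frac{\e^{\Re\{\eta^*\e^{\ji(\psi-\angle s_k)}\}}}{2\pi|s_k|I_0(|\eta|)}\,\delta(\rho-|s_k|)\,\rho\,\md\rho\,\md\psi=1$.
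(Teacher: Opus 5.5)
Your proposal is correct and follows essentially the same route as the paper. The paper derives $p(\theta\mid z,s_k)\propto\exp(\Re\{\eta_k^*\e^{\ji\theta}\})$ and $p(z\mid s_k)\propto \e^{-\gamma|s_k|^2}I_0(|\eta_k|)$ separately and then applies total probability, which is the same computation as your joint-posterior-then-marginalize organization; it also obtains (I) and (IV) by conditioning on $s_k$ with the same $1/|s_k|$ radial factor.

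Your version is somewhat more complete. The paper does not derive the conditional-mean expression in (III). As you note, that expression is the circular mean direction $\angle\,\mathbb{E}[\e^{\ji\theta}\mid z]$, not the linear expectation of $\theta$. The paper also does not justify the Jacobian or exclude $s_k=0$.
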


\begin{IEEEproof} The proof proceeds in five steps.\\
\noindent\textit{Step 1: Prior of $x$.} Conditioned on $s=s_k$, $|x| = |s_k|$ and its phase $\angle x = \angle s_k + \theta$. Thus, $\angle x$ is von Mises distributed with PDF $p(\angle x \!\mid\! s_k) = \mc{VM}(\angle x;\varepsilon\,\e^{\ji \angle s_k})$. Hence, the PDF $p(x \!\mid\! s_k)$ is given by
\begin{align}
	p(x\!\mid\! s_k) = \frac{\exp\big(\Re\{\varepsilon^* \e^{\ji (\angle x - \angle s_k)} \}\big)}{2\pi|s_k|I_0(|\varepsilon|)} \cdot \delta(|x| - |s_k|).
\end{align}
By the law of total probability, the marginal $p(x) = \sum_{k=1}^N p(x\!\mid\! s_k)\, p(s_k)$ yields \eqref{eq:x}.

\noindent\textit{Step 2: Conjugate posterior of $\theta$ given $z$ and $s_k$.} Since $n\sim\mc{CN}(0,\gamma^{-1})$, we have
\begin{equation}
	p(z \!\mid\! \theta, s_k) \propto \exp\big(-\gamma|z - s_k \e^{\ji\theta}|^2\big).
\end{equation}
Combining with the von Mises prior, $p(\theta)$, with $\theta$ and $s$ being independent, we have 
\begin{align} \label{eq:eta-PDF}
	p(\theta \!\mid\! z, s_k) &\propto p(z\!\mid\!\theta,s_k)p(\theta) \nonumber 	\\
	&\propto \exp\big(-\gamma|z - s_k \e^{\ji\theta}|^2\big)  \exp\big(\Re\bigl\{\varepsilon^*\e^{\ji\theta}\big\}\bigr) \nonumber \\
	&\propto \exp\big(2\gamma \Re\big\{z^*s_k\e^{\ji \theta}\big\}\big)\exp\big(\Re\bigl\{\varepsilon^*\e^{\ji\theta}\big\}\bigr)  \nonumber \\
	&\propto \exp\big(\Re\big\{\eta_k^* \e^{\ji\theta}\big\}\big),
\end{align}
which is $\mc{VM}(\theta;\eta_k)$, where $\eta_k$ is the posterior natural parameter defined in \eqref{eq:eta}.

\medskip\noindent\textit{Step 3: Marginal likelihood and symbol posterior.}
Using $\int_{-\pi}^{\pi} \e^{\Re\{\eta^* \e^{\ji\theta}\}}\,\mr{d}\theta = 2\pi\,I_0(|\eta|)$:
\begin{align}
	&p(z \!\mid\! s_k) \nonumber \\
	&= \int_{-\pi}^{\pi} p(z \!\mid\! \theta,s_k)\,p(\theta)\,\mr{d}\theta \nonumber \\
	& \propto \int_{-\pi}^{\pi}\! \exp\big(\!-\!\gamma\big[|s_k|^2\! -2\Re\big\{z^*s_k\e^{\ji\theta}\big\}\big]\big) \exp\big(\Re\bigl\{\varepsilon^*\e^{\ji\theta}\big\}\bigr) \,\mr{d}\theta\nonumber \\
	&\propto \e^{-\gamma|s_k|^2} \int_{-\pi}^{\pi} \exp\big(\Re\big\{\eta_k^* \e^{\ji\theta}\big\}\big) \,\mr{d}\theta \nonumber\\
	&\propto \e^{-\gamma|s_k|^2} I_0(|\eta_k|).
\end{align}
Applying Bayes' rule $p(s_k\!\mid\! z) \propto p(z\!\mid\! s_k) p(s_k)$ with prior $p_k$ and normalizing then gives \eqref{eq:ps}.

\medskip\noindent\textit{Step 4: Phase posterior.}
By the law of total probability:
\begin{equation}
	p(\theta \!\mid\! z) = \sum_{k=1}^N p(\theta \!\mid\! z, s_k)\,p(s_k\!\mid\!z)
	= \sum_{k=1}^N \alpha_k \,\mc{VM}(\theta;\eta_k),
\end{equation}
giving \eqref{eq:ptheta}.

\medskip\noindent\textit{Step 5: Signal posterior.} Since $p(\theta \!\mid\! z, s_k)$ is obtained as von Mises as given in \eqref{eq:eta-PDF} and $x = s\,\e^{\ji \theta}$, it follows that given $z$ and $s=s_k$, one has $|x| = |s_k|$ and $\angle x$ as von Mises distributed with the PDF $\mc{VM}(\angle x; \eta_k\, \e^{\ji\angle s_k})$. Thus, the PDF $p(x|z,s_k)$ is given by
\begin{equation}
	p(x \!\mid\! z,s_k) = 	\frac{\exp\bigl(\Re\bigl\{\eta_k^*\e^{\ji(\angle x - \angle s_k)}\bigr\}\bigr)}
	{2\pi\,|s_k|\,I_0(|\eta_k|)}
	\cdot \delta(|x| - |s_k|),
\end{equation}
By the law of total probability, ${p(x\!\mid\! z) = \sum_{k=1}^N p(x\!\mid\! z, s_k) \, p(s_k\!\mid\! z)}$ gives \eqref{eq:px}.
\end{IEEEproof}

\begin{remark}
	The posterior natural parameter $\eta_k = \varepsilon + 2\gamma\,z\,s_k^*$ decomposes
	cleanly: the prior $\varepsilon$ encodes baseline phase uncertainty with mean
	$\angle \varepsilon$ and concentration $|\varepsilon|$, while $2\gamma\,z\,s_k^*$ is the
	matched filter output --- the correlation of observation $z$ against template $s_k$,
	scaled by twice the noise precision.  Since $\angle(z\,s_k^*) = \angle z - \angle s_k$,
	the likelihood steers the posterior mean toward $\angle z - \angle s_k$ as $\gamma$
	increases, which is the maximum-likelihood phase estimate for symbol $s_k$.
	The posterior concentration $|\eta_k| \geq |\varepsilon|$ always exceeds the prior,
	confirming that the observation can only sharpen the phase estimate.
\end{remark}

\begin{corollary}[Posterior Mean and Variance of $x$]
	\label{cor:meanvar}
	Let $d_k \triangleq I_1(|\eta_k|)/I_0(|\eta_k|)$ be the \emph{mean resultant length} of the $k$-th component, and define the per-component posterior mean
	\begin{equation}
	 \mathbb{E}[x \!\mid\! z,\, s_k] = s_k\,d_k\,\e^{\ji\angle\eta_k}.
		\label{eq:xhat}
	\end{equation}
	
	\medskip\noindent (i) \emph{Posterior mean:}
	\begin{equation}
	\bar{x} =\mathbb{E}[x \!\mid\! z]
			= \sum_{k=1}^N \alpha_k\,s_k\,d_k\,\e^{\ji\angle\eta_k}.
		\label{eq:mean}
	\end{equation}
	Each symbol contributes its amplitude $|s_k|$, shrunk by the mean resultant length
	$d_k \in [0,1)$ and rotated to the posterior mean phase $\angle s_k + \angle\eta_k$.
	
	\medskip\noindent (ii) \emph{Posterior variance:}
	\begin{align}
		\mathrm{Var}[x \!\mid\! z]= \sum_{k=1}^N \alpha_k|s_k|^2 - |\bar{x}|^2.
		\label{eq:var}
	\end{align}
\end{corollary}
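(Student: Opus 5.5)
The plan is to condition on the symbol, using the mixture representation of Theorem~\ref{thm:main}, and then apply the laws of total expectation and total variance. From part (IV), given $z$ and $s=s_k$, the variable $x$ lies on the circle of radius $|s_k|$ with $x = s_k\e^{\ji\theta}$. By Step~2 of the proof of Theorem~\ref{thm:main}, $\theta \mid z, s_k \sim \mc{VM}(\eta_k)$. The first step is therefore to compute $\mathbb{E}[x \mid z, s_k] = s_k\,\mathbb{E}[\e^{\ji\theta}\mid z,s_k]$. Using the preliminary moment formula for a von Mises variable with $n=1$, mean direction $\angle\eta_k$ and concentration $|\eta_k|$, this gives $\mathbb{E}[\e^{\ji\theta}\mid z,s_k] = d_k \e^{\ji\angle\eta_k}$, which establishes \eqref{eq:xhat}. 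That $d_k\in[0,1)$ follows from the standard fact $0\le I_1(\kappa)<I_0(\kappa)$.

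For (i), I would invoke the law of total expectation over the discrete posterior $\alpha_k = P(s=s_k\mid z)$ from part (II):
\[
\mathbb{E}[x\mid z]=\sum_{k=1}^N \alpha_k\,\mathbb{E}[x\mid z,s_k],
\]
which yields \eqref{eq:mean} directly. The interpretive sentence follows by writing $s_k=|s_k|\e^{\ji\angle s_k}$.

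For (ii), I would avoid the per-component variance decomposition and instead use $\mathrm{Var}[x\mid z]=\mathbb{E}[|x|^2\mid z]-|\bar x|^2$. The key observation is that $|x|^2=|s|^2|\e^{\ji\theta}|^2=|s|^2$ holds deterministically, since $x$ is supported on the circle of radius $|s_k|$ under each component (the $\delta(|x|-|s_k|)$ factor in (IV)). Hence $\mathbb{E}[|x|^2\mid z]=\sum_k\alpha_k|s_k|^2$, which gives \eqref{eq:var}. As a consistency check, I would verify that this agrees with the law of total variance, in which each component contributes $|s_k|^2(1-d_k^2)$ by the preliminary variance formula and the spread of the component means adds $\sum_k\alpha_k|s_k|^2d_k^2-|\bar x|^2$.

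There is no real obstacle here. The only point needing care is justifying that the conditional law of $\theta$ given $(z,s_k)$ is exactly $\mc{VM}(\eta_k)$, so that the Bessel-ratio moment formula applies with the correct mean direction $\angle\eta_k$ rather than $\angle\eta_k+\angle s_k$. The $\angle s_k$ rotation is already carried by the factor $s_k$.
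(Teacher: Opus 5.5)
Your proposal is correct and follows the paper's proof essentially step for step. You compute the per-component mean from the first von Mises moment, applied to $\theta \mid z,s_k \sim \mc{VM}(\eta_k)$ with $s_k$ factored out; the paper applies the same moment to $\angle x \mid z,s_k \sim \mc{VM}(\eta_k\e^{\ji\angle s_k})$, which is equivalent. You then use total expectation over $\alpha_k$, and obtain the variance from $\mathbb{E}[|x|^2\mid z]-|\bar{x}|^2$ with $|x|^2=|s_k|^2$, exactly as the paper does. Your law-of-total-variance cross-check and the argument that $d_k\in[0,1)$ because $0\le I_1<I_0$ are both correct, and the paper does not include them.
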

\begin{IEEEproof}
	Since $\angle x \!\mid\! z, s_k \sim \mc{VM}(\eta_k \e^{\ji\angle s_k})$, one has
	\begin{align}
	\mathbb{E}[x \!\mid\! z,\, s_k]
		&= |s_k|\,\mathbb{E}\!\left[\e^{\ji\angle x}\!\mid\! z,s_k\right]\nonumber \\
		&= |s_k|\,d_k\,\e^{\ji(\angle \eta_k+\angle s_k)}
		= s_k\,d_k\,\e^{\ji \angle\eta_k},
	\end{align}
	giving \eqref{eq:xhat} and \eqref{eq:mean} by linearity.
%
%
The variance is given by
\begin{align}
\mr{Var}[x\!\mid\! z] &= \mathbb{E}\big[|x|^2 \!\mid\! z] - \big|\mathbb{E}[x\!\mid\! z]\big|^2 \nonumber\\
&= \sum_{k=1}^N \mathbb{E}[|x|^2\!\mid\! z,s_k] \,p(s_k\!\mid\! z)- |\bar{x}|^2.
\end{align}
Since $|x|^2 = |s_k|^2$ given $s = s_k$, this gives the results in \eqref{eq:var}.
\end{IEEEproof}

\subsection{Improved MF-VB Algorithm} 
	We assume a Gamma distribution $p(\gamma) = {\rm Gamma}(\alpha,\beta)$, as a conjugate prior for the noise precision $\gamma$. The joint distribution over observation $\mb{y}$ and latent variables $\mb{x}$ and $\bs{\phi}$ can be factorized as
	\begin{align} \label{eq:con:gau:mf-2}
		p(\mb{y},\mb{x},\bs{\phi},\gamma) = p(\mb{y}\!\mid\!\mb{x},\bs{\phi},\gamma) p(\mb{x})p(\bs{\phi})p(\gamma),
	\end{align}
	where $p(\mb{y}\,|\,\mb{x},\bs{\phi},\gamma) = \mc{CN}\big(\bs{\Phi}^H\mb{y};\mb{H}\mb{x} ,\gamma^{-1}\mb{I}_M\big) $. We approximate the posterior distribution using a mean-field
factorization
	\begin{align}
		p(\mb{x},\bs{\phi},\gamma|\mb{y};\mb{H}) \approx  q(\mb{x},\bs{\phi},\gamma) = \prod_{i=1}^K q(x_i)  \prod_{m=1}^M q({\phi}_m)q(\gamma).
	\end{align}
	
		\emph{1) Update $\phi_m$:} Taking the expectation of the conditional in \eqref{eq:con:gau:mf-2} w.r.t. all latent variables except $\phi_m$, the variational distribution $q(\phi_m)$ can be derived as
	\begin{align*}
		q(\phi_m) &\propto 	p(\phi_m)\exp\bigl(\Re\{\eta_{\mr{r},m}^*\e^{\ji \phi_m}\}\bigr),
	\end{align*}
	where $\eta_{\mr{r},m} $ is now defined as
	\begin{align}
		\eta_{\mr{r},m} &= 2\hat{\gamma}y_m\big(\mb{H}_{m,:}\hat{\mb{x}}\big)^*  \nonumber \\
		&= 2\hat{\gamma} \big(|y_m|^2 d_{\mr{r},m} \e^{\ji\hat{\phi}_m} - y_mr_m^*\big). 
	\end{align} 
	
\emph{2) Update $x_i$:} The variational distribution $q(x_i)$ can be derived as
		\begin{align} \label{eq:q-x}
			q(x_i) 
			&\propto p(x_i) \exp\bigl(-\hat{\gamma}\|\mb{h}_i\|^2|x_i-z_i|^2\bigr) \nonumber \\
			&\propto p(x_i) \,\mc{CN}\big(z_i;x_i,1/(\hat\gamma\|\mb{h}_i\|^2)\big).
		\end{align} 
		Thus, $q(x_i) = p\big(x_i\!\mid\! z_i = x_i + \mc{CN}\big(0,1/(\hat\gamma\|\mb{h}_i\|^2)\big)$. The variational  mean $\hat{x}_i$ and variance $\tau_{x_i}$ of $x_i$ can be obtained from Corollary \ref{cor:meanvar}. Here, the matched-filter output is
	\begin{align} 
		z_i &= \frac{\mb{h}_i^H}{\|\mb{h}_i\|^2}\left(\hat{\bs{\Phi}}^H\mb{y} - \sum_{j\neq i}^K \mb{h}_j\hat{x}_j\right)\nonumber \\
		&= \hat{x}_i + {\mb{h}_i^H\mb{r}}/{\|\mb{h}_i\|^2} .
	\end{align} 
	
	\emph{3) Update $\gamma$:} The variational distribution $q(\gamma)$ can be obtained as
\begin{align}
	q(\gamma) &\propto p(\gamma)\exp\left\{\ln p(\mb{y}|\mb{x},\bs{\phi},\gamma)\right\} \nonumber\\
	&\propto \exp\big\{M\ln\gamma - \gamma\blr{\|\bs{\Phi}^H\mb{y} - \mb{H}\mb{x}\|^2 \nonumber\\
    &\quad+ (a_\gamma - 1)\ln \gamma - b_\gamma  \gamma}\big\}.
\end{align}
Thus, $q(\gamma)$ is Gamma with mean
\begin{align} \label{eq:gamma:VM-2}
	\hat{\gamma} = \frac{ M + a_\gamma}{b_\gamma + \blr{\|\bs{\Phi}^H\mb{y} - \mb{H}\mb{x}\|^2} }.
\end{align} 
where
\begin{align}
	&\blr{\|\bs{\Phi}^H\mb{y} - \mb{H}\mb{x}\|^2}   \nonumber\\
	&= \|\mb{r}\|^2 + \sum_{m=1}^M \var[\e^{\ji\phi_m}y_m] + \sum_{i=1}^K\|\mb{h}_i\|^2\var[x_i]\nonumber\\
	&= \|\mb{r}\|^2 + \sum_{m=1}^M|y_m|^2(1-d_{\mr{r},m}^2)  + \sum_{i=1}^K \|\mb{h}_i\|^2 \tau_{x_i}.
\end{align}

The improved MF-VB algorithm using the prior of $x_i$ is summarized in Algorithm~\ref{algo-4}. A key structural advantage over Algorithms \ref{algo-2} and \ref{algo-3} is that the update of $x_i$ in Step 9 does not require phase-noise tracking variables $d_{\rm{t},i},\hat{\theta}_i$; instead, these are encoded within the composite posterior of $x_i$. Consequently, the residual vector update in Step 11 simplifies to $\mb{r} \gets \mb{r} + \mb{h}_i (\hat{x}_i^\ell - \hat{x}_i^{\ell+1})$, reducing per-iteration overhead compared with Algorithm \ref{algo-2}. Final symbol and phase decisions are recovered from the posteriors in parts \eqref{eq:ps} and \eqref{eq:ptheta} of Theorem 1, respectively.
\begin{algorithm}[!t]
	\small
	\caption{-- \textit{\textbf{Improved MF-VB Algorithm} for Inferring $\mb{x}$}}
	\label{algo-4}
	\begin{algorithmic}[1]
		\State \textbf{Input:} $\mb{y}$, $\mb{H}$, and priors $\big\{p(s_i)\big\}$, $\big\{p(\theta_i)\big\}$, $\big\{p(\phi_m)\big\}$\;
		\State \textbf{Output:} $\hat{\mb{s}}$, $\hat{\bs{\theta}}$, and $\hat{\bs{\phi}}$\;
		\State Initialize $\hat{x}_i^1 = 0$, $\tau_{x_i} = \mr{Var}_{p(s_i)}[s_i]$, $\forall i$; and $\mb{r} = \mb{y}$
		\State Initialize $\hat{\phi}_m^1 = 0$, 
		and $d_{\mr{r},m}^1 = {I_1( |\varepsilon_{\mr{r},m}|)}/{I_0(|\varepsilon_{\mr{r},m}|)}$,~$\forall m$
		\For{$\ell=1,2,\ldots,T$}
		\State $\hat{\gamma} \leftarrow M/\bigg(\|\mb{r}\|^2 + \sum\limits_{m=1}^N|y_m|^2\big(1-(d_{\mr{r},m}^\ell)^2\big) + \sum\limits_{i=1}^K \|\mb{h}_i\|^2 \tau_{x_i}^\ell \bigg)$
		\For{$i=1,2,\ldots,K$}
		\State $\displaystyle z_i \gets \hat{x}_i^\ell +   {\mb{h}_i^H\mb{r}}/{\|\mb{h}_i\|^2}$
		\State $\phantom{z_i}\mathllap{\hat{x}_i^{\ell+1}} \gets \mathbb{E}\big[x_i\!\mid\!z_i=x_i+\mc{CN}\big(0,1/(\hat{\gamma}\|\mb{h}_i\|^2)\big)\big]$
		\State $\phantom{z_i}\mathllap{\tau_{x_i}^{\ell+1}} \gets \mr{Var}\big[x_i\!\mid\!z_i=x_i+\mc{CN}\big(0,1/(\hat{\gamma}\|\mb{h}_i\|^2)\big)\big]$
		\State $\phantom{z_i}\mathllap{\mb{r}} \gets \mb{r} + \mb{h}_i (\hat{x}_i^\ell - \hat{x}_i^{\ell+1})$
	\EndFor
	\State Similar to Steps (19) -- (23) of Algorithm \ref{algo-2}.
	\EndFor
	\State $\forall i: \hat{s}_i \leftarrow \arg \max_{a\in \mathcal{A}} p(a\!\mid\!z_i)$ using \eqref{eq:ps} of Theorem \ref{thm:main}\;
	\State $\forall i: \hat{\theta}_i \leftarrow \mathbb{E}[\theta _i\!\mid\! z_i]$ using \eqref{eq:ptheta}  of Theorem \ref{thm:main}\; 
\end{algorithmic}
\end{algorithm}

\subsection{Computational Complexity Analysis}
We end this section with a complexity analysis of the MF-VB, LMMSE-VB, and improved MF-VB algorithms, as shown in Table \ref{tab:complexity}. 

\begin{table}[!t]
\centering
\caption{Algorithm Computational Complexity}
\label{tab:complexity}
\begin{tabular}{|c|c|} 
 \hline
 \bf{Algorithm} & \bf{Complexity} \\ 
 \hline\hline
 Naive ML & $\mc{O}(M|\mc{S}|^K)$  \\ \hline
Naive LMMSE-VB \cite{nguyen2022variational} & $\mc{O}(M^3T + KT|\mc{S}|)$  \\ \hline
SIW in \cite{combes2017approximate} & $\mc{O}(M^3 + M|\mc{S}|^K)$  \\ \hline
MF-VB (Algorithm \ref{algo-2}) & $\mc{O}(KT(M+|S|))$  \\ \hline
LMMSE-VB (Algorithm \ref{algo-3}) & $\mc{O}(M^3 T + KT(M^2+|\mc{S}|))$ \\ \hline
Improved MF-VB (Algorithm \ref{algo-4}) & $\mc{O}(KT(M+|\mc{S}|))$  \\ \hline
\end{tabular}
\end{table}

The naive ML detector performs an exhaustive search over all $|\mc{S}|^K$ candidate vectors, computing the residual $\|\mb{y} - \mb{Hx}\|^2$ for each candidate, yielding complexity $\mc{O}(M|\mc{S}|^K)$.
The naive LMMSE-VB detector requires computing the $K\times K$ Gram matrix $\mb{H}^H\mb{H}$ and its inverse, yielding $\mc{O}(M^3T + KT|\mc{S}|)$, where $T$ is the maximum number of iterations. 
The SIW detector \cite{combes2017approximate} avoids explicit PN tracking but constructs an approximate likelihood by whitening the PN-induced self-interference covariance, followed by two rounds of nearest-neighbor detection via exhaustive search. This incurs a complexity of $\mc{O}(M^3+M|\mc{S}|^K)$, where the $M^3$ term comes from the Cholesky decomposition of the whitening matrix and the $M|\mc{S}|^K$ term reflects the exhaustive search over all $|\mc{S}|^K$ candidate vectors, making SIW computationally intractable for large MIMO systems with high-order modulations.
The MF-VB algorithm (Algorithm \ref{algo-2}) performs, for each user, matched-filter projections $\mb{h}_i^H\mb{r}$ in steps 9 and 13 with cost $\mc{O}(M)$, residual updates in steps 12 and 17 with cost $\mc{O}(M)$, scalar von Mises updates in steps 14, 15, and 16 with cost $\mc{O}(1)$, and posterior table lookups over $|\mc{S}|$ constellation points in steps 10 and 11. Summing over $K$ users and $T$ iterations, the dominant complexity is $\mc{O}(KT(M+|S|))$. 
In the proposed LMMSE-VB algorithm (Algorithm \ref{algo-3}), the invert of the resulting $M\times M$ effective noise covariance matrix $\hat{\bs{\Gamma}}$ with complexity $\mc{O}(M^3)$, which dominates the computation of $\mb{H}\bs{\Sigma}_{\bs{\Theta}\mb{s}}^\ell\mb{H}^H$, whose complexity is $\mc{O}(KM^2)$. The LMMSE-weighted projection $\mb{h}_i^H\hat{\mb{\Gamma}}\mb{r}$ in step 11 has complexity $\mc{O}(KM^2)$. The inner loop in steps 10 $\rightarrow$ 20 has complexity $\mc{O}(KM^2 + K|\mc{S}|)$ per iteration. Therefore, the total complexity over $T$ iterations is $\mc{O}(M^3 T + KT(M^2+|\mc{S}|))$.
The improved MF-VB algorithm (Algorithm \ref{algo-4}) absorbs the transmitter PN into $x_i$, eliminating the separate estimation steps for $\theta_i$ in Algorithm 1. For each user, step 8 incurs complexity $\mc{O}(M)$, while steps 9 and 10 evaluate over $|\mc{S}|$ constellation points, resulting in complexity $\mc{O}(|\mc{S}|)$. The residual update simplifies to $\mb{r} \gets \mb{r} + \mb{h}_i(\hat{x}_i^\ell - \hat{x}_i^{\ell+1})$ with cost $\mc{O}(M)$, removing two von Mises update steps per user per iteration relative to Algorithm \ref{algo-2}. The dominant complexity of Algorithm \ref{algo-4} remains $\mc{O}(KT(M+|\mc{S}|))$. However, Algorithm \ref{algo-4} exhibits longer runtime than Algorithm \ref{algo-2} in simulations, since computing the posterior of the composite variable $x_i$ incurs a higher constant factor than computing the posteriors of $s_i$ and $\theta_i$ separately.

\section{Simulation Results} \label{sec:Sim}
We consider a massive MIMO system where the BS is equipped with $M=24$ antennas and $K = 8$ users with 16-QAM signaling. 
We implement all the iterative algorithms with a maximum of $100$ iterations and consider scenarios with $100$ transmitted data symbols. The noise variance $N_0$ is set based on the SNR, which is defined as
\begin{align}
    {\rm SNR} = \frac{\mathbb{E}[\|\bs{\Phi}\mb{Hx}\|^2]}{\mathbb{E}[\|\mb{n}\|^2]} =\frac{\sum_{i=1}^K \tr\{\mb{R}_i\}}{MN_0}= \frac{K}{M N_0}.
\end{align}
We consider two-channel models for the channel matrix $\mathbf{H}$. The first assumes i.i.d. Gaussian entries, i.e., $\mb{R}_i=(1/M)\mb{I}_M$, which corresponds to i.i.d. Rayleigh fading.  The second considers correlated Gaussian coefficients, corresponding to correlated Rayleigh fading, where each column of $\mb{H}$ follows an exponential spatial correlation model. In this case, the covariance matrix $\mb{R}_i$ is given by
\begin{align}
    [\mathbf{R}_i]_{k\ell} =
\begin{cases}
(1/M)\alpha^{k-\ell}, & \text{if } k \ge \ell, \\
(1/M)\left(\alpha^{\ell-k}\right)^{*}, & \text{if } k < \ell ,
\end{cases}
\end{align}
where $\alpha$ denotes the (complex) correlation coefficient between neighboring receive antennas. In all simulations involving correlated channels, we set $\alpha = 0.4+0.4\,\ji$. The PNs are assumed to be Gaussian with zero mean and variance $\gamma_{\mr{t},i}^{-1}$ for user $i$ and $\gamma_{\mr{r},m}^{-1}$ for receive antenna $m$ at the BS. We compare the proposed VB algorithms with the following baseline algorithms:
\begin{itemize}
    \item Naive ML: The ML detector without considering phase noise.
    \item Naive LMMSE-VB: The LMMSE-VB in \cite{nguyen2022variational} without considering phase noise.
    \item The SIW algorithm in \cite{combes2017approximate}.
\end{itemize}

\begin{figure}[!t]
  	\centering
 	\includegraphics[width=1.0\linewidth]{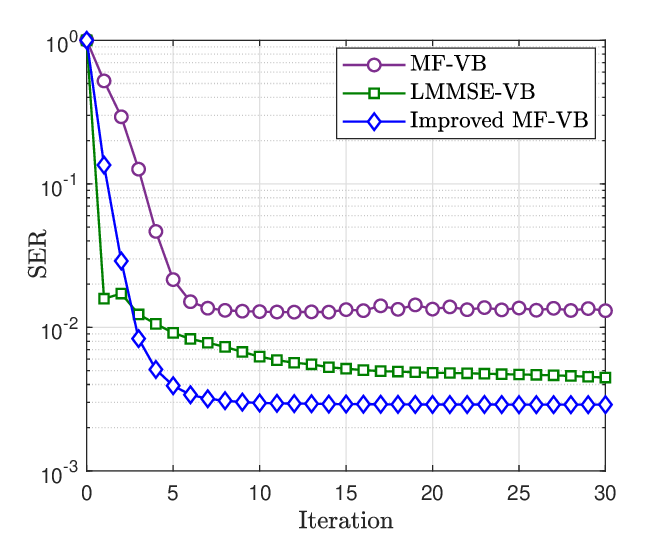}
 	\caption{Convergence of the proposed iterative algorithms assuming i.i.d. Rayleigh fading channels with $M=16$, $K = 8$, $16$-QAM, $\gamma_{\mr{t},i}^{-1/2} = \gamma_{\mr{r},m}^{-1/2} = 5^\circ$. The plots are obtained by averaging over 1000 trials.}
 	\label{Convergence}
\end{figure}
We first study the convergence of all iterative algorithms, such as Algorithms 1, 2, and 3, {with standard deviations of PN $\gamma_{\mr{t},i}^{-1/2} = \gamma_{\mr{r},m}^{-1/2} = 5^\circ$, as shown in Fig.~\ref{Convergence}.} All three VB algorithms are observed to converge reliably within fewer than $20$ iterations, confirming the practical effectiveness of the coordinate ascent variational inference updates employed in the proposed VB framework.

\begin{figure}
  	\centering
 	\includegraphics[width=1.0\linewidth]{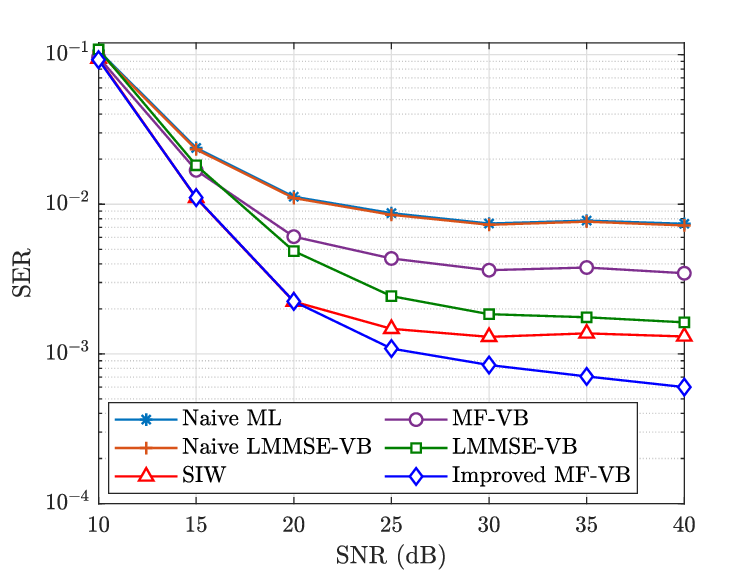}
 	\caption{SER vs. SNR, assuming i.i.d. Rayleigh fading channels with $M=24$, $K = 8$, $16$-QAM signaling, $\gamma_{\mr{t},i}^{-1/2} = \gamma_{\mr{r},m}^{-1/2} = 6^\circ$.}
 	\label{SER_SNR}
\end{figure}

Fig. \ref{SER_SNR} compares the SER performance of all considered algorithms under i.i.d. Rayleigh fading channels. 
The naive ML and naive LMMSE-VB algorithms exhibit the worst performance and quickly reach an error floor around $10^{-2}$ at high SNR, indicating their inability to mitigate PN distortion. In contrast, the proposed VB-based algorithms achieve significant performance improvements by accounting for PN uncertainty. Among them, LMMSE-VB outperforms MF-VB because it employs the full postulated noise covariance matrix $\mb{C}^{\rm post}$ rather than a scalar noise variance, thereby better capturing the residual interference structure. The improved MF-VB algorithm achieves the best SER performance, attaining SERs on the order of $10^{-3}$ at high-SNRs. This is because the reformulation of the transmitter phase noise directly into the composite transmit signal $x_i$, which eliminates the multiplicative coupling between $s_i$ and $\theta_i$ in the VB updates and yields a conjugate posterior under the von Mises prior, as derived in Theorem \ref{thm:main}. The SIW algorithm, achieving similar SER to the improved MF-VB at SNR $< 20$ dB, suffers an increasing performance gap relative to the improved MF-VB at high SNR due to its reliance on a first-order approximation of the PN.

\begin{figure}
  	\centering
 	\includegraphics[width=1.0\linewidth]{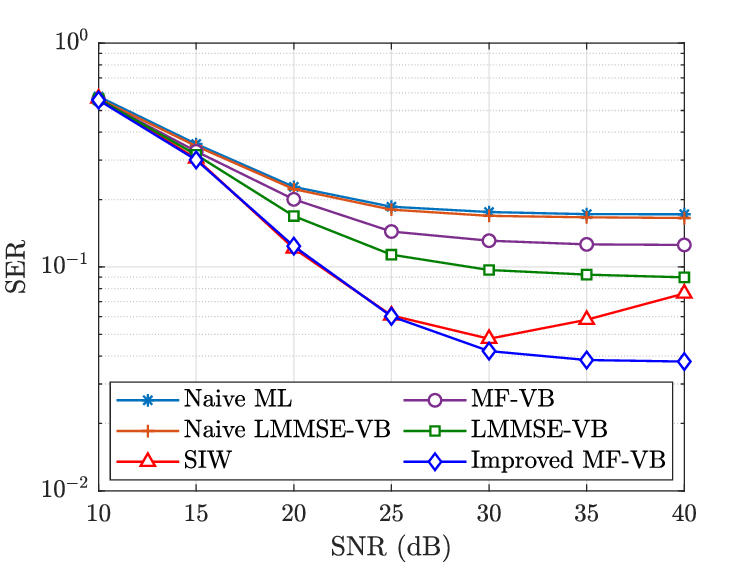}
 	\caption{SER vs. SNR, assuming i.i.d. Rayleigh fading channels with $M=24$, $K = 8$, $64$-QAM signaling, $\gamma_{\mr{t},i}^{-1/2} = \gamma_{\mr{r},m}^{-1/2} = 6^\circ$.}
 	\label{SER_SNR_64QAM_iid}
\end{figure}

In Fig. \ref{SER_SNR_64QAM_iid}, we evaluate the SER performance of all algorithms with a higher-order modulation of 64-QAM under i.i.d. Rayleigh fading. The higher constellation density renders the system more susceptible to phase noise, as the reduced inter-symbol distances make the received signal more sensitive to phase rotations. The naive ML and naive LMMSE-VB detectors again saturate at error floors well above those of the phase-noise-aware methods, confirming that phase noise awareness is indispensable for high-order modulation schemes. The improved MF-VB achieves the lowest SER at high SNRs, followed by SIW, LMMSE-VB, and MF-VB. The performance advantage of the improved MF-VB over the SIW detector becomes even more pronounced under $64$-QAM, highlighting the importance of an accurate posterior model for the transmit signal when higher-order constellations are employed. This result highlights the scalability of the proposed approach to scenarios with high spectral efficiency requirements.

\begin{figure}
  	\centering
 	\includegraphics[width=1.0\linewidth]{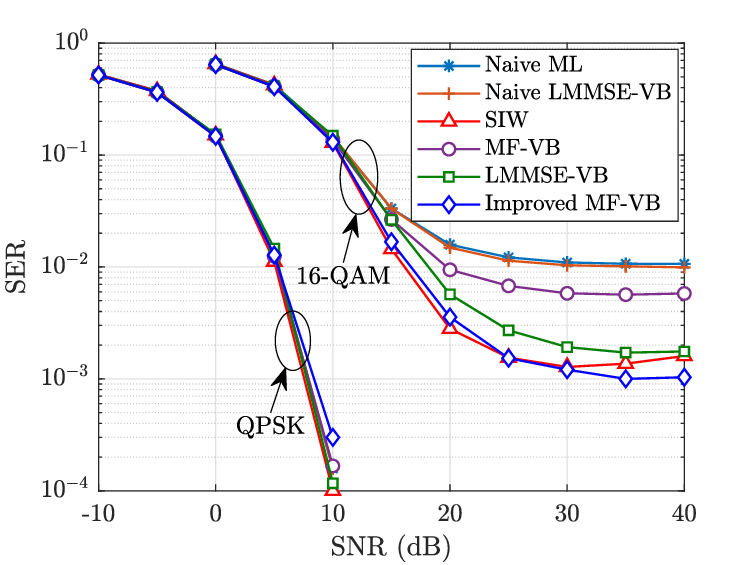}
 	\caption{SER vs. SNR, assuming correlated Rayleigh fading channels with $M=24$, $K = 8$, QPSK and $16$-QAM signaling, $\gamma_{\mr{t},i}^{-1/2} = \gamma_{\mr{r},m}^{-1/2} = 6^\circ$.}
 	\label{SER_SNR_QPSK_16QAM_correlated_channel}
\end{figure}
Fig. \ref{SER_SNR_QPSK_16QAM_correlated_channel} examines the SER performance under correlated Rayleigh fading channels with $\alpha = 0.4+\ji 0.4$. For QPSK, thanks to the high PN tolerance of low-order constellations whose decision regions occupy large phase angles, the SER of all algorithms is nearly identical. For 16-QAM, the improved MF-VB and SIW remain the top-performing detectors, with improved MF-VB achieving superior SER at high SNR. The LMMSE-VB algorithm, which accounts for the full noise covariance structure, exhibits noteworthy gains over MF-VB under correlated fading, since the off-diagonal terms of the residual interference matrix become significant when the channel vectors are correlated. These results confirm that the proposed VB framework is robust to realistic spatial channel correlation models.

\begin{figure}
  	\centering
 	\includegraphics[width=1.0\linewidth]{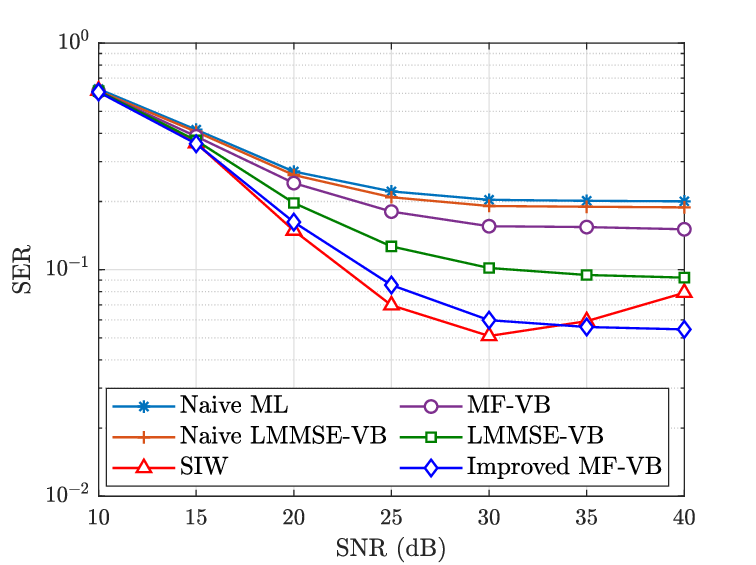}
 	\caption{SER vs. SNR, assuming correlated Rayleigh fading channels with $M=24$, $K = 8$, $64$-QAM signaling, $\gamma_{\mr{t},i}^{-1/2} = \gamma_{\mr{r},m}^{-1/2} = 6^\circ$.}
 	\label{SER_SNR_64QAM_correlated_channel}
\end{figure}
Fig. \ref{SER_SNR_64QAM_correlated_channel} extends the correlated fading evaluation to $64$-QAM under the same system configuration as Fig. \ref{SER_SNR_QPSK_16QAM_correlated_channel}. The combination of spatial correlation and high-order modulation amplifies the performance differences between algorithms accordingly. The naive ML and naive LMMSE-VB detectors exhibit pronounced error floors. The SIW algorithm experiences significant degradation since its approximate likelihood formulation is less effective under correlated channels. In contrast, the improved MF-VB detector maintains a substantial performance advantage, owing to its principled VB treatment of the composite transmit signal and its joint estimation of the noise precision parameter $\gamma$, which adapts to the effective interference level. These results highlight the robustness of the proposed approach to both high-order modulation and spatially correlated propagation environments.

Since the PN is pronounced at mmWave frequencies due to the increased oscillator instability at high carrier frequencies, we also evaluate the proposed VB algorithms under a sparse mmWave propagation environment to assess their practical relevance in this regime. We consider a few-path mmWave propagation environment based on the Saleh-Valenzuela model \cite{nguyen2025mimo}. The channel model is similar to Section II-A of \cite{nguyen2025mimo}, where we set the width of the angular sector as $80^\circ$, with the angle of arrival drawn uniformly from the interval $[-40^\circ,40^\circ]$, the number of paths as $16$, and the antenna spacing as $1/2$.

\begin{figure}
  	\centering
 	\includegraphics[width=1.0\linewidth]{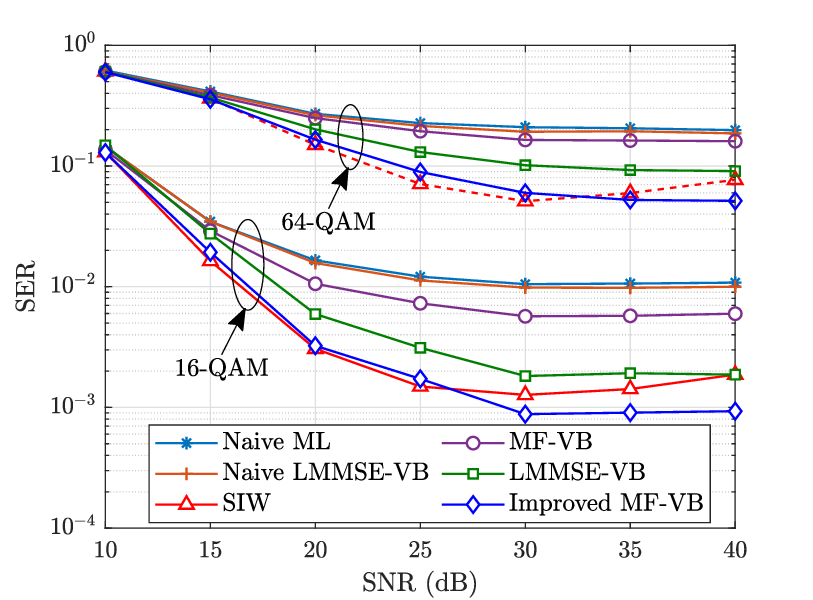}
 	\caption{SER vs. SNR, assuming few-path mmWave channels with $M=24$, $K = 8$, $16$-QAM and $64$-QAM signalings, and $\gamma_{\mr{t},i}^{-1/2} = \gamma_{\mr{r},m}^{-1/2} = 6^\circ$.}
 	\label{SER_SNR_few_path_16_64_QAM}
\end{figure}
Fig. \ref{SER_SNR_few_path_16_64_QAM} evaluates the proposed algorithms under sparse mmWave propagation for 16-QAM and 64-QAM. The improved MF-VB maintains the lowest SER across the entire SNR range in both cases, confirming that its principled VB treatment of the composite transmit signal $x_i$, and adaptive noise precision estimation remain effective under the spatial characteristics of mmWave channels. Compared to the i.i.d. Rayleigh case, LMMSE-VB narrows its gap with the improved MF-VB, since the low-rank interference structure of the sparse mmWave channels allows the full covariance estimate $\hat{\bs{\Gamma}}$ to better capture the dominant interference directions. Conversely, SIW suffers more degradation than in the i.i.d. case, since its approximate likelihood derived under a flat-frequency interference assumption is poorly matched to the spatially concentrated interference of sparse channel geometries. This effect is further amplified under 64-QAM. These results confirm that the advantages of the proposed VB framework are not confined to rich-scattering environments but extend robustly to the sparse, directional channels expected in practical mmWave deployments.


\begin{figure}
  	\centering
 	\includegraphics[width=1.0\linewidth]{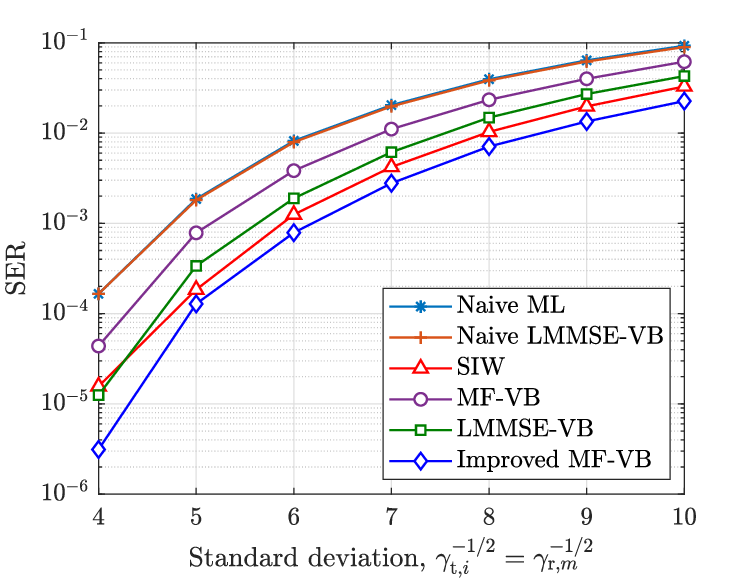}
 	\caption{SER vs. Standard deviation of PN assuming i.i.d. Rayleigh fading channels with $\gamma_{\mr{t},i}^{-1/2} = \gamma_{\mr{r},m}^{-1/2}$, $M=24$, $K = 8$, SNR $=30$ dB, and 16-QAM signaling.}
 	\label{SER_Std}
\end{figure}

Fig. \ref{SER_Std} illustrates the impact of phase noise severity on the detection performance of all considered algorithms, with the PN standard deviation $\gamma_{\mr{t},i}^{-1/2} = \gamma_{\mr{r},m}^{-1/2}$ varied jointly from $4^\circ$ to $10^\circ$ under i.i.d. Rayleigh fading channels. As expected, increasing PN variance degrades the detection accuracy of all methods, since larger phase excursions introduce more severe rotational distortions of the received constellation. The naive ML and naive LMMSE-VB detectors are the most sensitive to increasing PN standard deviation. The improved MF-VB maintains the lowest SER across the entire range of PN standard deviations, demonstrating strong robustness to oscillator quality variations. The reason is that the conjugate structure of the von Mises posterior over $x_i$ continuously adapts the inferred posterior concentration $\kappa_{\mr{t},i}$ in proportion to the observed evidence, and thereby absorbs varying degrees of phase uncertainty without degrading the tightness of the posterior distribution. The SIW algorithm performs close to the improved MF-VB at low PN variance but experiences a widening gap as PN severity increases, consistent with the increase in inaccuracy of its first-order phase noise approximation at larger variance values. These results confirm that the proposed improved MF-VB is well-suited to systems operating at high carrier frequencies, such as mmWave and terahertz bands, where oscillator phase noise tends to be more pronounced.

\begin{figure}
  	\centering
 	\includegraphics[width=1.0\linewidth]{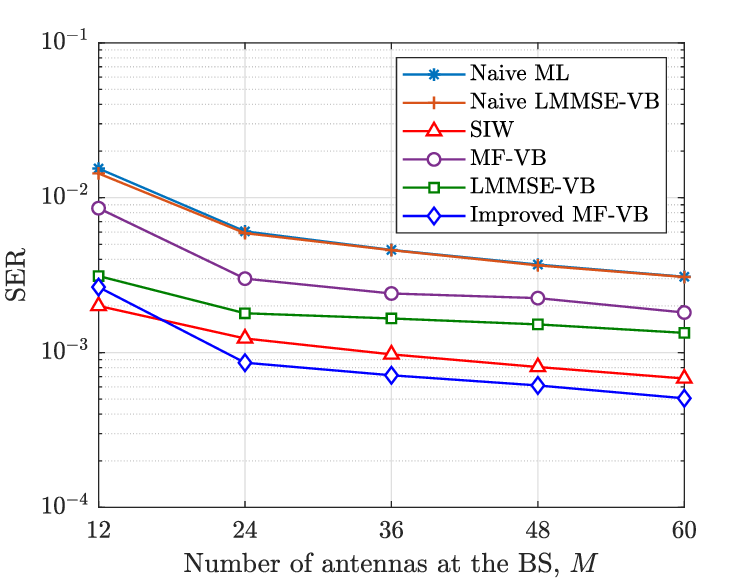}
 	\caption{SER vs. $M$ assuming i.i.d. Rayleigh fading channels with $\gamma_{\mr{t},i}^{-1/2} = \gamma_{\mr{r},m}^{-1/2} = 6^\circ$, $K = 6$, SNR $=30$ dB, 16-QAM signaling.}
 	\label{SER_antennas}
\end{figure}

Fig. \ref{SER_antennas} shows the SER as a function of the number of receive antennas $M$. When $M$ increases from $12$ to $60$, all algorithms benefit from the enhanced spatial diversity and MIMO degrees-of-freedom (DoF). The naive ML and naive LMMSE-VB detectors improve only marginally with increasing $M$, while the improved MF-VB achieves the lowest SER across all antenna settings and maintains its performance advantage over SIW and LMMSE-VB as $M$ grows. This behavior highlights its ability to exploit the additional spatial DoF for enhanced phase-noise mitigation. The observed scalability is promising for massive MIMO deployments.

\begin{figure}
  	\centering
 	\includegraphics[width=1.0\linewidth]{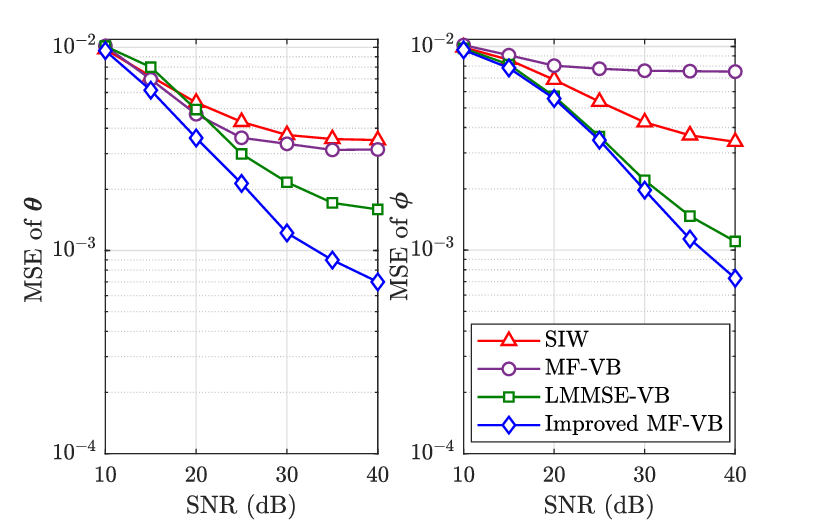}
 	\caption{MSE performance of the phase noise estimation assuming i.i.d. Rayleigh fading channels with $M=24$, $K = 8$, $16$-QAM signaling, and $\gamma_{\mr{t},i}^{-1/2} = \gamma_{\mr{r},m}^{-1/2} = 6^\circ$.}
 	\label{PN_MSEvsSNR}
\end{figure}
Fig. \ref{PN_MSEvsSNR} complements the SER analysis by reporting the MSE of the transmitter PN $\bs{\theta}$ and receiver PN $\bs{\phi}$ as a function of SNR. The improved MF-VB algorithm achieves the lowest MSE for both components across the entire SNR range, with estimation error continuing to decrease steadily at high SNR, as a result of Theorem \ref{thm:main}, where the posterior concentration sharpens proportionally with the noise precision $\hat{\gamma}$. The MF-VB detector saturates at a low MSE level for both PN components, reflecting the estimation penalty of the multiplicative coupling between $s_i$ and $\theta_i$ in its variational updates, while LMMSE-VB partially mitigates this through its full covariance estimate $\hat{\bs{\Gamma}}$. The SIW \cite{combes2017approximate} achieves competitive data detection performance, but it is not designed for PN estimation. As stated in \cite{combes2017approximate}, SIW adopted a modular receiver architecture in which the PN was treated as an unknown parameter and marginalized out to construct the approximate marginal likelihood. The PN was entered into the algorithm only through the self-interference covariance matrix, and no PN estimate was produced during detection. Furthermore, the first-order linearization of PN underlying the SIW approximation introduces an inherent bias in the PN model. In contrast, the proposed VB algorithms maintain and iteratively refine a variational posterior over PN jointly with the data symbols, making them the appropriate framework for joint detection and PN estimation.

\section{Conclusions} \label{sec:Conclusions}
In this paper, we have developed a VB framework for joint phase noise estimation and data detection in uplink multiuser MIMO systems. By treating the PN-corrupted transmit signal as a single composite variable, we bypassed the multiplicative coupling inherent in separation-based approaches and obtained exact conjugate closed-form posteriors under the von Mises prior. Building on this, we developed three detectors: MF-VB, LMMSE-VB, and the improved MF-VB, with increasing detection performance. Simulation results confirmed that the improved MF-VB achieved the lowest SER across i.i.d. Rayleigh fading, correlated Rayleigh fading, and mmWave channels, with the performance advantage becoming more pronounced at high SNR, high-order modulations, and stronger phase noise. All proposed algorithms converged reliably within $20$ iterations and scaled well with the number of antennas. Future work will explore extensions to joint channel estimation, Wiener phase noise temporal modeling, and hardware-efficient implementations for massive MIMO receivers.
\balance
\def\baselinestretch{.97}
\bibliographystyle{IEEEtran}
\bibliography{Refs_PhaseNoise}
\end{document}